\newtheorem{lemma}{\quad \textbf{Lemma}}
\def\BibTeX{{\rm B\kern-.05em{\sc i\kern-.025em b}\kern-.08em
    T\kern-.1667em\lower.7ex\hbox{E}\kern-.125emX}}
\begin{document}
\title{Joint Vehicle Connection and Beamforming Optimziation in Digital Twin Assisted Integrated Sensing and Communication Vehicular Networks}
\author{Weihang Ding~\IEEEmembership{Student Member, IEEE}, Zhaohui Yang~\IEEEmembership{Member, IEEE}, Mingzhe Chen~\IEEEmembership{Member, IEEE},\\ Yuchen Liu~\IEEEmembership{Member, IEEE} and Mohammad Shikh-Bahaei~\IEEEmembership{Senior Member, IEEE}

\thanks{W. Ding and M. Shikh-Bahaei are with Department of Engineering, King's College London, London, UK. (email: weihang.ding@kcl.ac.uk; m.sbahaei@kcl.ac.uk)}
\thanks{Z. Yang is with the College of Information Science and Electronic Engineering, Zhejiang University, Hangzhou, Zhejiang 310027, China, and Zhejiang Provincial Key Lab of Information Processing, Communication and Networking (IPCAN), Hangzhou, Zhejiang, 310007, China (Email: yang\_zhaohui@zju.edu.cn). }
\thanks{M. Chen is with the Department of Electrical and Computer Engineering and Frost Institute for Data Science and Computing, University of Miami, Coral Gables, FL 33146 USA (Email: \protect\url{ mingzhe.chen@miami.edu)}.} 
\thanks{Y. Liu is with the Department of Computer Science, North Carolina State University, Raleigh, NC, 27695, USA (Email: yuchen.liu@ncsu.edu).}
\vspace{-1.5em}

\thanks{This work is supported in part by National Key R\&D Program of China (Grant No. 2023YFB2904804), Young Elite Scientists Sponsorship Program by CAST 2023QNRC001,  Zhejiang Key R\&D Program under Grant 2023C01021, the Fundamental Research Funds for the Central Universities, K2023QA0AL02}
\vspace{-1.5em}
\thanks{This work is supported in part by the U.S. National Science Foundation under Grants CNS-2312138, CNS-2312139, and CNS-2332834.}

}


\maketitle

\begin{abstract}

This paper introduces an approach to harness digital twin (DT) technology in the realm of integrated sensing and communications (ISAC) in the sixth-generation (6G) Internet-of-everything (IoE) applications. We consider moving targets in a vehicular network and use DT to track and predict the motion of the vehicles. After predicting the location of the vehicle at the next time slot, the DT designs the assignment and beamforming for each vehicle. 
The real time sensing information is then utilized to update and refine the DT, enabling further processing and decision-making. In the DT, an extended Kalman filter (EKF) is used for precise motion prediction. This model incorporates a dynamic Kalman gain, which is updated at each time slot based on the received echo signals. The state representation encompasses both vehicle motion information and the error matrix, with the posterior Cramér-Rao bound (PCRB) employed to assess sensing accuracy. We consider a network with two roadside units (RSUs), and the vehicles need to be allocated to one of them. To optimize the overall transmission rate while maintaining an acceptable sensing accuracy, an optimization problem is formulated. Since it is generally hard to solve the original problem, Lagrange multipliers and fractional programming are employed to simplify this optimization problem. To solve the simplified problem, this paper introduces both greedy and heuristic algorithms through optimizing both vehicle assignments and predictive beamforming. The optimized results are then transferred back to the real space for ISAC applications. Recognizing the computational complexity of the greedy and heuristic algorithms, a bidirectional long short-term memory (LSTM)-based recurrent neural network (RNN) is proposed for efficient beamforming design within the DT. Simulation results demonstrate the effectiveness of the DT-based ISAC network. Notably, the LSTM-based RNN method achieves similar transmission rates as the heuristic algorithm but with significantly reduced computational complexity.


\end{abstract}

\begin{IEEEkeywords}
Integrated sensing and communication, vehicular network, digital twin.
\end{IEEEkeywords}

\section{Introduction}

\IEEEPARstart{D}{igital} twins (DTs) that serve as virtual representations of physical objects, systems, or processes, offering detailed and dynamic simulations of their real-world counterparts, has gained significant attention from both academia and industry. 
Specifically, through constantly learning and updating from the real space, DTs accurately characterize the working conditions and locations of physical entities, enabling precise predictions of future events. Therefore, DTs have been applied for widespread applications in edge computing \cite{JiangLi, DaiYueyue, Do-Duy}, the Internet of things (IoT) \cite{zhangPeiying, WangHuan, LuYunlong, Chukhno}, cyber-physical systems \cite{YunSeongjin, LinWD, XieGuoqi}, and vehicle networks.

DT offers a dynamic virtual representation of vehicular network systems, mirroring the physical state, processes, and real-time systems of vehicles or networks. With various sensors installed on the vehicle and network infrastructure, as well as broader contextual information such as traffic conditions and environmental factors, DTs are capable of simulating, predicting with data, and optimizing the performance and maintenance of the physical counterparts \cite{Zhangke}. DT has been demonstrated to be effective in traffic management and optimization \cite{FuYanfang, DongJianghong}. By simulating traffic flows and vehicle interactions within a DT of a vehicular network, cities and organizations can optimize traffic patterns, reduce congestion, and enhance safety. Additionally, by simulating various crash scenarios and cyber-attack simulations, DT contributes to improving vehicle safety features and enhancing resilience against attacks on vehicular networks \cite{Khan, Shadrin}. Furthermore, DT can play a crucial role in monitoring the real-time health and performance of vehicles, predicting maintenance needs \cite{Dimitrova, WuXingtang}, and forecasting battery charging requirements \cite{YuGang}. Given these applications and the rising trend of developing autonomous vehicles, the collaboration between DTs and integrated sensing and communication (ISAC) in the sixth-generation (6G) IoE holds great promise \cite{yang2023joint, yang2024optimizing}.


The integration of sensor technologies with advanced communication (ISAC) systems \cite{Chiriyath} has been attracted great attention in recent years. To realize ISAC, the authors in \cite{Ghatak} proposed a millimeter wave (mmWave) system to support both positioning and downlink broadband services and analyzed the trade-off between positioning accuracy and communication rate. The authors in \cite{Destino} explored this trade-off in a similar system, by splitting the overall time for beam alignment and data transmission respectively. The optimal overhead in a multi-user scenario was designed in \cite{Kumar}, balancing the communication and sensing performances. The above works \cite{Chiriyath, Ghatak, Destino, Kumar} all focused on a scenario where the mobility of the users is limited. For high-speed vehicles, for example, vehicles that are driven on highways, the delay induced by the overhead is intolerable since it will lead to outdated motion prediction. To reduce the delay, dual-functional radars are proposed, that is, both radar sensing and communication are achieved with one dual-functional device and the same signal \cite{LiuFan}. In \cite{Xu}, a single radar waveform has been used for both sensing and carrying information. This scheme is a low-data-rate scheme because the information bits are transmitted by selecting either the down-chirp waveform or the up-chirp waveform. Due to inherent correlations, the sensing information can be further used for channel estimation \cite{ChenXu, LinBangjiang, Hossein}.

The innovation of the fifth-generation (5G) technology, which exploits both massive multiple-input multiple-output (mMIMO) antenna arrays and the mmWave spectrum, provides the opportunity for reliable ISAC systems \cite{Heath}. Compared to MIMO, mMIMO deploys a much larger array. It has been shown in \cite{Marzetta} that if the transmitter uses an infinite number of antennas to serve a limited number of users, fast-fading tends to vanish and the channels tend to be orthogonal. In other words, the system can be viewed as interference-free when the vehicular network is relatively sparse. The mMIMO technique can be used to compensate for the excessive path-loss of the mmWave signals and formulate ``pencil-like" beams to concentrate the signal power in the desired direction \cite{F.Liu}. The beam-squint effect, induced by the increasing number of antennas, is harnessed in conjunction with the beam-split effect for Joint Channel Assignment and Scheduling (JCAS) in the mmWave/terahertz (THz) band \cite{GaoFeifei}. Therefore, mMIMO and mmWave are an excellent combination to achieve high resolution and high rate and can support ISAC in the sixth-generation (6G) network \cite{FangXinran, QiQiao}. In addition, it is also possible to apply positioning reference signals (PRSs) in 5G for sensing, positioning, and communication \cite{WeiZhiqing}.

Previous research primarily focused on improving the performance of ISAC systems by designing dual-functional transmit signal waveforms. In \cite{Yongjun}, the authors designed adaptive integrated waveforms that improved both the mutual information of the impulse response and the data rate. In \cite{Huang}, the authors proposed a scheme for embedding digital information into radar signals and designed a low-complexity receiver to recover the information. In \cite{MaoTianqi}, the authors designed a data-embedded multi-subband quasi-perfect waveform for ISAC in mmWave and low THz band. Besides, phase coding has been applied for ISAC signal design in \cite{WeiZhiqing2} to reduce the effect of noise. More recent works have focused on beamforming design to enhance performance. Extended Kalman filter (EKF) and learning algorithms were respectively applied in \cite{LiuFan} and \cite{LiuChang} to jointly optimize sensing and communication performance by exploiting the Cramer-Rao lower bound (CRB). In \cite{YuanWeijie}, the authors proposed a message-passing algorithm for estimating vehicle states and applied a Bayesian approach to analyze information content propagating on the factor graph. In \cite{LiuXiang}, the transmitter utilizes the jointly precoded communication and radar waveforms to design dual-functional transmit beamforming, which optimizes both functions simultaneously. Furthermore, the security challenges in ISAC were studied in \cite{Chang, ChuJinjin, Gunlu, YangYinchao}.

Several works have jointly considered DT and ISAC for IoE applications. In \cite{Gongyu}, the authors focused on resource allocation for ISAC in DT without designing an explicit DT model. Cui et al. \cite{CuiYuanhao} proposed an ISAC waveform design method without accounting for the movement of objects. In \cite{LiBin}, the authors used DT to make intelligent offloading decisions in an unmanned aerial vehicle (UAV)-assisted ISAC network but primarily concentrated on stationary ground targets. Mu et al. \cite{MuJunsheng} proposed a Federated Learning (FL)-empowered DT-based communication-assisted sensing network with synthetic aperture radar (SAR) distributed on aircraft. In \cite{Román-García}, the authors proposed a smart building DT by sensing power consumption and acting according to environmental factors. However, \cite{Román-García} lacks mathematical analysis. Liu et al. \cite{LiuJingxian} proposed a DT-based method to intelligently predict the state and trajectory of moving targets, but the targets are maneuvering, and communication is not involved. This paper involves the design of a comprehensive DT for modeling and predicting vehicle movements, then collaboratively assigning vehicles to two roadside units (RSUs) and designing predictive beamforming for the next time slot. A major advantage of applying DT in this work is that it provides a centralized framework for aggregating data from multiple RSUs. This centralized approach to data handling enables more comprehensive analytics, better decision-making, and a unified view of the entire system, as opposed to the isolated sensing information provided by individual RSUs. The DT can make real-time decisions and perform real-time simulations in response to the real-time information collected from the physical environment. Furthermore, considering the allocation of vehicles to different RSUs, it is essential that the information collected from various RSUs be managed jointly. Therefore, decisions must be made using DT rather than by each RSU independently. In our work, we only consider the scenario with two RSUs, but there might be multiple RSUs in reality. However, due to path loss effects, the performance gains obtained from RSU selection are only significant when the distances between the vehicle and the RSUs are similar. Since the coverages of the RSUs can be seen as circles, we only consider the vehicles located near the line connecting the intersection points of two circles. In a system with multiple RSUs, we can simplify the problem by decomposing it into problems between each pair of RSUs. EKF is applied to track the movement of the vehicles in DT. The real-time Kalman gain is updated based on the difference between the received and expected echo signals. Subsequently, the predicted states are corrected using the Kalman gain and the echo signal. Our main focus is on maximizing the transmission rate, with sensing accuracy serving as a constraint. To address this optimization problem, we introduce both a greedy algorithm and a heuristic algorithm. Furthermore, we propose a bidirectional long short-term memory (LSTM)-based recurrent neural network (RNN) to predict the beamforming of the vehicles. The key contributions of this paper are outlined as follows:

\begin{itemize}
    \item A DT is employed for real-time modeling of vehicles in an ISAC-based vehicular network. The motion model is based on an EKF, and the Kalman gain is updated at each time slot based on the received echo signal to accurately track the real-time location and movement of vehicles. The state of the vehicle includes not only the motion information of the vehicle but also the error matrix, and the posterior Cramér-Rao Bound (PCRB) is used to evaluate the sensing accuracy of the vehicles.
    \item After the DT obtains the predicted locations of the vehicles, we formulate an optimization problem to optimize the communication performance of the system. Sensing accuracy is considered as a constraint to ensure an accurate DT for the real network. We consider a vehicular network with two RSUs, so the vehicles need to be assigned to one of them. This optimization problem is simplified using Lagrange multipliers and fractional programming. To address this challenge, we propose both a greedy algorithm and a heuristic algorithm. These algorithms are designed to jointly optimize vehicle assignments and predictive beamforming, aiming to maximize the overall transmission rate.
    \item Due to the high computational complexity of both the greedy algorithm and the heuristic algorithm, we introduce an efficient approach using a bidirectional LSTM-based method to design beamforming for each vehicle in the DT. The bidirectional LSTM is employed to account for the high correlation among the beamformings of adjacent vehicles. Simulation results demonstrate that the LSTM-based method achieves a transmission rate only slightly lower than the heuristic algorithm while significantly reducing computational complexity.
\end{itemize}

The subsequent sections of this paper are organized as follows. In Section II, we present the system model, including the radar sensing model and the communication model. Section III introduces the proposed DT framework. The optimization problem is formulated in Section IV and solved in Section V. An alternative method utilizing LSTM networks is discussed in Section VI. Section VII presents all relevant simulation results. Finally, Section VIII concludes the whole paper.

\section{System Model}

\begin{figure*}[!t]
    \centering
    \includegraphics[width=1.75\columnwidth]{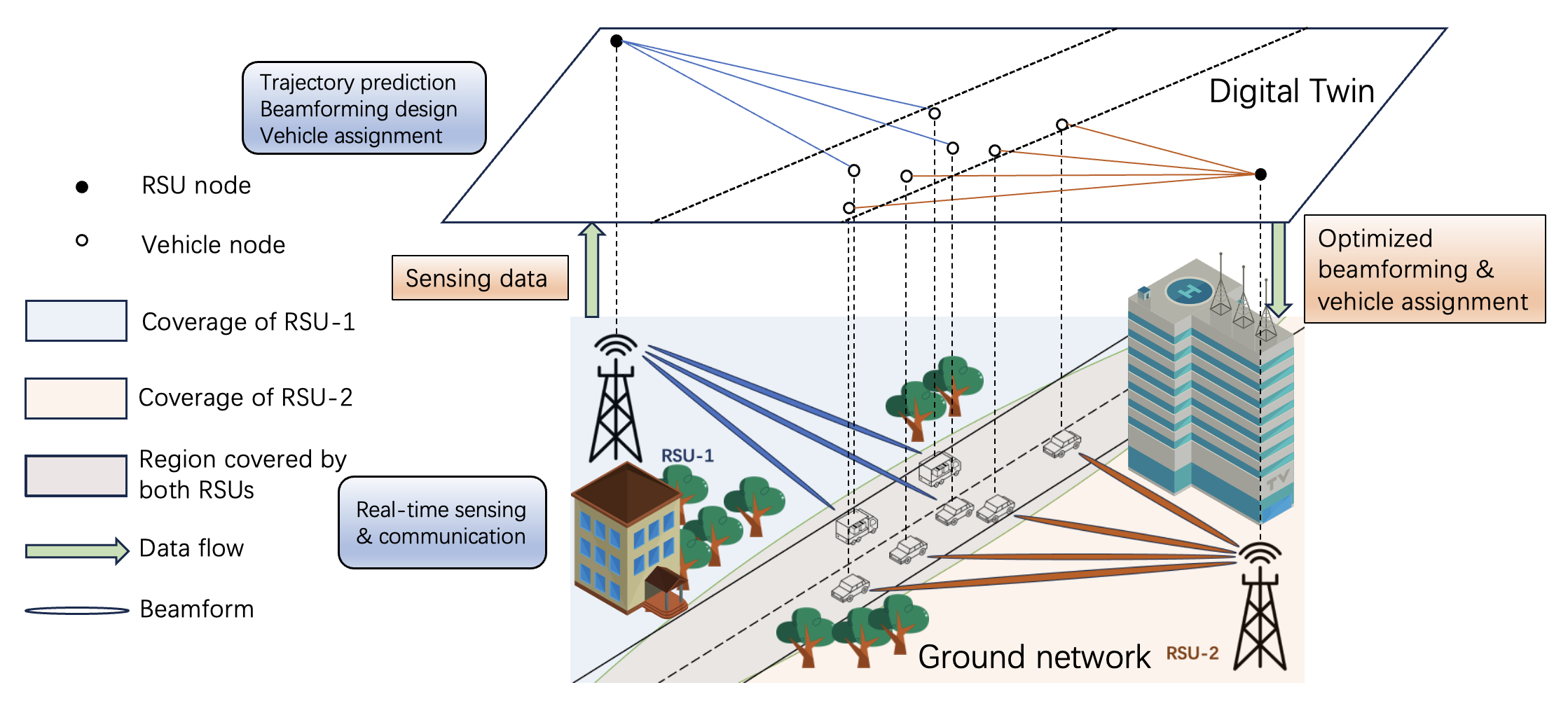}
    \caption{The considered vehicle network with 2 RSUs and $K$ vehicles. }
    \label{model}
\end{figure*}

We consider a vehicular network that consists of two RSUs equipped with mMIMO antennas\footnote{One can easily extend the considered model with two RSUs to a network with multiple RSUs since we have considered the overlap caused by two RSUs.} serving $K$ vehicles, as shown in Fig. \ref{model}. At each time slot, the RSUs simultaneously transmit information to the vehicles and sense the position of these vehicles. The sensed information is then forwarded to the DT where a two-dimensional projection of the physical network is constructed. Using matched filtering and EKF, the DT continuously tracks the vehicles, predicts their movements, and optimizes both vehicle assignments and beamforming for the next time slot to achieve high-capacity ISAC. Next, we first introduce the radar sensing model. Then, we explain the communication model including the channel gain and signal-to-interference-plus-noise
ratio (SINR) expression.  

\subsection{Radar Sensing}

Both RSUs are equipped with a mmWave-band mMIMO uniform linear array (ULA) consisting of $N_t$ transmit antennas and $N_r$ receive antennas. Each vehicle is equipped with only one antenna. Since there are $K$ vehicles that need to be served in the system, at the $n$-th time slot, the $i$-th RSU transmits a multi-dimensional multi-beam ISAC signal. Based on the information measured from the echo signal received at the $(n-1)$-th time slot, the beamforming matrix of the $i$-th RSU at time slot $n$ can be expressed as
\begin{equation}
    \mathbf{F}_{i,n} = [\mathbf{f}_{[i,1],n}\xi_{[i,1],n},  \dots, \mathbf{f}_{[i, K],n}\xi_{[i,K],n}] \in \mathbb{C}^{N_t\times K},
\end{equation}
where $\mathbf{f}_{[i,k],n}\in \mathbf{C}^{N_t\times 1}$ is the beamforming vector for the $k$-th vehicle, and $\xi_{[i,k],n}\in\{0,1\}$ indicates whether vehicle $k$ is connected to RSU $i$ at time slot $n$. Thus, the transmitted signal at RSU $i$ can be expressed as
\begin{equation}
    \tilde{\mathbf{s}}_{i,n}(t) = \mathbf{F}_{i,n}\mathbf{s}_{i,n}(t)\in \mathbb{C}^{N_t \times 1},
\end{equation}
where $\mathbf{s}_{i,n}(t)=[s_{[i,1],n}(t), s_{[i,2],n}(t), \dots, s_{[i,K],n}(t)]^{T}\in \mathbb{C}^{K \times 1}$ is the ISAC signal.

Since the signal is omnidirectional, the signal received by each RSU contains the echoes reflected by all vehicles within the coverage of the RSU, regardless of whether they are connected to the RSU or not. Thus, the reflected echo signal received at the $i$-th RSU at time slot $n$ can be expressed as
\begin{equation}
\begin{aligned}
    \mathbf{r}_{i,n}(t) & = \kappa \sum_{k=1}^{K} \beta_{[i,k],n}e^{j2\pi\mu_{[i,k],n}t}\mathbf{b}(\varphi_{[i,k],n})\mathbf{a}^{H}(\varphi_{[i,k],n})\\
    &\cdot \tilde{s}_{i,n}(t-\nu_{[i,k],n})+\mathbf{z}_{i,n}(t),
\end{aligned}
\end{equation}
where $\kappa=\sqrt{N_tN_r}$ is the antenna gain, $\mu_{[i,k],n}$ and $\nu_{[i,k],n}$ are respectively the Doppler frequency and the time delay with respect to vehicle $k$ and RSU $i$ at time slot $n$, $\mathbf{a}(\varphi_{[i,k],n})\in \mathbb{C}^{N_t\times 1}$ and $\mathbf{b}(\varphi_{[i,k],n})\in \mathbb{C}^{N_r\times 1}$ are respectively the transmit and receive steering vectors, with $\varphi_{[i,k],n}=\cos{\theta_{[i,k],n}}$, $\mathbf{z}_{i,n}(t)\in \mathbb{C}^{N_r\times 1}$ denotes the noise vector at RSU $i$ with $\sigma_e^2$ being the variance of each element of $\mathbf{z}_{i,n}(t)$, and $\beta_{[i,k],n}=\frac{\varrho}{2d_{[i,k],n}}$ is the reflection coefficient with $d_{[i,k],n}$ being the distance between vehicle $k$ and RSU $i$ at time slot $n$ and $\varrho$ represents the complex fading coefficient that depends on the radar cross-section. The transmit and receive steering vectors $\mathbf{a}(\varphi_{[i,k],n})$ and $\mathbf{b}(\varphi_{[i,k],n})$ can be respectively expressed as
\begin{equation}
    \mathbf{a}(\varphi_{[i,k],n}) = \sqrt{\frac{1}{N_t}}[1, e^{-j\pi \varphi_{[i,k],n}}, \dots, e^{-j\pi (N_t-1)\varphi_{[i,k],n}}]^T,
\end{equation}
\begin{equation}
    \mathbf{b}(\varphi_{[i,k],n}) = \sqrt{\frac{1}{N_r}}[1, e^{-j\pi \varphi_{[i,k],n}}, \dots, e^{-j\pi (N_r-1)\varphi_{[i,k],n}}]^T.
\end{equation}

The steering vectors corresponding to different vehicles in the network are asymptotically orthogonal. However, we cannot assume that $|\mathbf{a}^H(\varphi_{[i,k],n})\mathbf{a}(\varphi_{[i,k'],n})|=0$ for all $k\neq k'$ in a dense-traffic vehicular network, as it is more likely that multiple vehicles are at similar angles relative to the RSU. Since steering vectors are asymptotically orthogonal to each other, only echoes from vehicles at similar angles will interfere with each other. At the $i$-th RSU, the echoes from the $k$-th vehicle can be extracted using the following equation:
\begin{equation}
\begin{aligned}
    \mathbf{r}_{[i,k],n}(t) &=\kappa\beta_{[i,k],n}e^{j2\pi\mu_{[i,k],n}t}\mathbf{b}(\varphi_{[i,k],n})\mathbf{a}^{H}(\varphi_{[i,k],n})\\
    & \cdot \mathbf{f}_{[i,k],n}\xi_{[i,k],n}s_{[i,k],n}(t-\nu_{[i,k],n})+\mathbf{z}_{[i,k],n}(t),
\end{aligned}
\end{equation}
where $\mathbf{z}_{[i,k],n}(t)\in \mathbb{C}^{N_r\times 1}$ is the summation of the noise and interference caused by the echoes of other vehicles. With the existence of inter-beam interference, $\mathbf{z}_{[i,k],n}(t)$ can be written as
\begin{equation}
\begin{aligned}
    \mathbf{z}_{[i,k],n}(t) &= \kappa \sum_{m\ne k}^{K} \beta_{[i,m],n}e^{j2\pi\mu_{[i,m],n}t}\mathbf{b}(\varphi_{[i,k],n})\mathbf{a}^{H}(\varphi_{[i,k],n}) \\
    & \cdot \mathbf{f}_{[i,m],n}\xi_{[i,k],n}s_{[i,k],n}(t-\nu_{[i,m],n})+\mathbf{z}_{i,n}(t).
\end{aligned}
\end{equation}

\subsection{Communication Model}
Different from previous works \cite{1,2,3,4,5,6,7,8,9}, in this work, the RSUs transmit control information, DT-related data, and personal data traffic to the vehicles. The vehicles are equipped with a single antenna, thus the communication between the RSU and the vehicles forms a multiple-input and single-output (MISO) system. At time slot $n$, the signal received by vehicle $k$ from RSU $i$ can be expressed as
\begin{equation}
\begin{aligned}
    c_{[i,k],n}(t) = &\kappa'\sqrt{\alpha_{[i,k],n}}e^{j2\pi\mu_{[i,k],n}t}\mathbf{a}^{H}(\varphi_{[i,k],n})\tilde{\mathbf{s}}_{i,n}(t)+z_{c}(t),
\end{aligned}
\end{equation}
where $\kappa'=\sqrt{N_t}$ represents the transmit antenna gain, and $z_{c}(t)\sim\mathcal{N}(0,\sigma_{c}^2)$ denotes the communication noise. Finally, $\alpha_{[i,k],n}$ is the path-loss coefficient, which can be computed by
\begin{equation}
    \alpha_{[i,k],n} = \tilde{\alpha} d_{[i,k],n}^{-2},
\end{equation}
where $\tilde{\alpha}$ is the channel power gain at the reference distance $d_0 = 1$m. Ideally, the beamforming matrices of different vehicles are asymptotically orthogonal, which yields $\mathbf{a}^{H}(\varphi_{[i,k],n})\tilde{\mathbf{s}}_{i,n}(t) \approx s_{[i,k],n}(t)$. However, interference between vehicles is often unavoidable in a dense network, especially for vehicles that are overlapping or at similar angles relative to a RSU. The SINR
at vehicle $k$ can be written as
\begin{equation}
\begin{aligned}
    &\text{SINR}_{[i,k],n} = \\
    &\frac{\kappa'^2|\alpha_{[i,k],n}||\mathbf{a}^{H}(\varphi_{[i,k],n})\mathbf{f}_{[i,k],n}|^2\xi_{[i,k],n}}{\sum_{m\neq k}^{K}\kappa'^2|\alpha_{[i,k],n}||\mathbf{a}^{H}(\varphi_{[i,k],n})\mathbf{f}_{[i,m],n}|^2\xi_{[i,m],n}+\sigma_c^2}.
\end{aligned}
\end{equation}

\section{DT-based State Transition and Prediction}
Next, we introduce the construction of the DT with the sensed information.
Sensing information of the physical network is forwarded to the DT layer for data processing. After matched filtering, the state of the vehicle at the next time slot can be predicted. Additionally, optimal vehicle assignment and beamforming can be designed based on the predicted states of the vehicles. For vehicle $k$ with respect to RSU $i$ at time slot $n$, we define $\mathbf{x}_{[i,k],n}=[\varphi_{[i,k],n}, d_{[i,k],n}, \dot{v}_{[i,k],n}]^T \in \mathbb{C}^{3 \times 1}$ as its state, and $\mathbf{y}_{[i,k],n}=[\tilde{\mathbf{r}}_{[i,k],n}, \tilde{\nu}_{[i,k],n}, \tilde{\mu}_{[i,k],n}]^T \in \mathbb{C}^{(N_t+2) \times 1}$ as the measured parameters. Here, $\dot{v}_{[i,k],n}$ represents the radial velocity of the $k$-th vehicle to the $i$-th RSU at time slot $n$. $\tilde{\nu}_{[i,k],n}$ and $\tilde{\mu}_{[i,k],n}$ are the measurements of $\nu_{[i,k],n}$ and $\mu_{[i,k],n}$, respectively. $\tilde{\mathbf{r}}_{[i,k],n}$ is the matched filtering output, which will be discussed in the next subsection. The vector $\mathbf{y}_{[i,k],n}$ can be obtained through matched filtering and further used to correct the prediction of $\mathbf{x}_{[i,k],n}$ and predict $\mathbf{x}_{[i,k],n+1}$. The block diagram of the DT-based ISAC network is shown in Fig. \ref{BlockDiagram}. To avoid extra overhead, the matched-filtering and Kalman filter are operated at the RSUs and only the states of the vehicle are needed to be transmitted to the processor for decision-making. The above processes can be operated either on a local DT or through edge computing, depending on the scale of the applications. In this work, we assume that the vehicle assignment and beamforming design are performed locally because there are stringent delay limits.

\begin{figure}[!t]
    \centering
    \includegraphics[width=0.7\columnwidth]{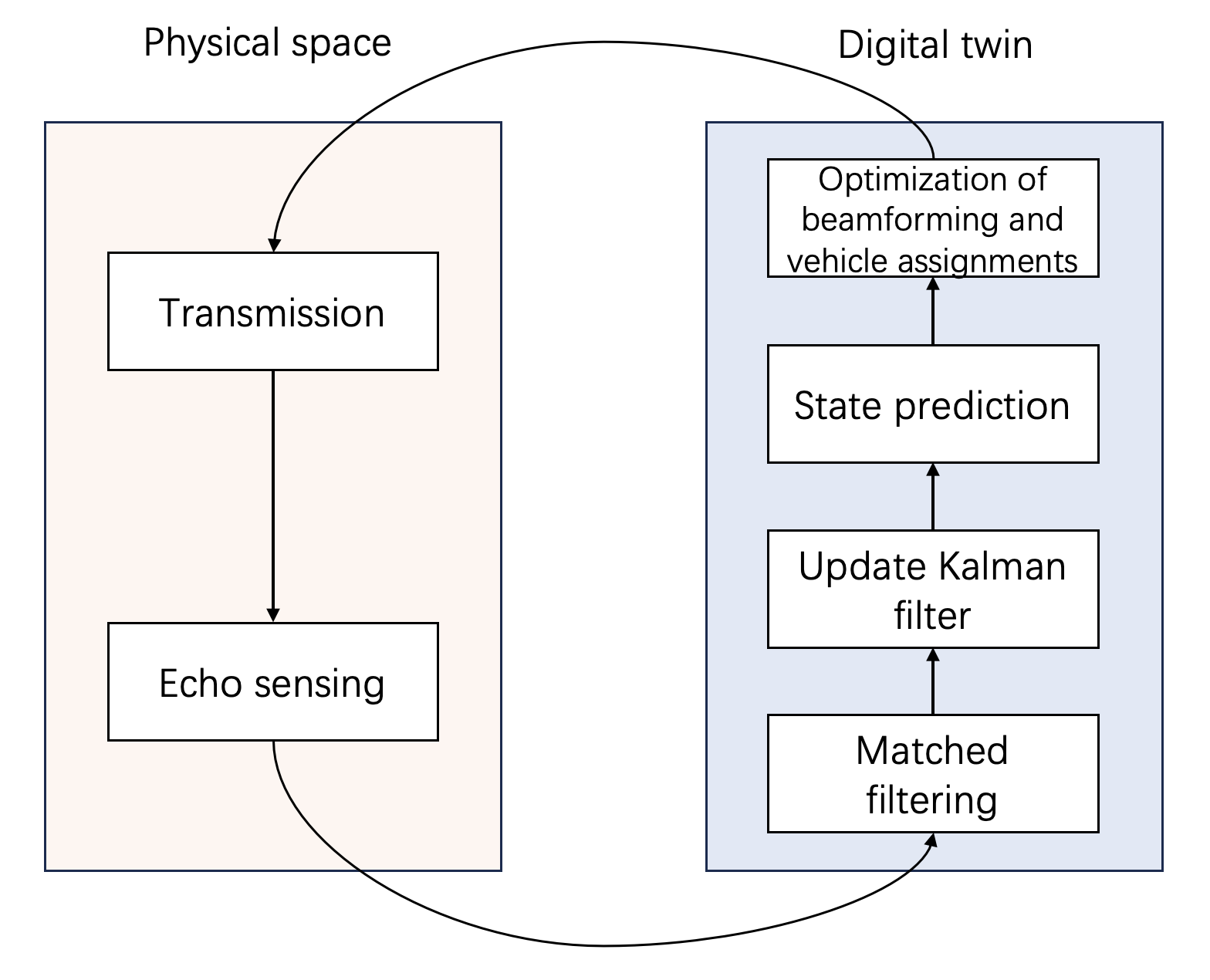}
    \caption{The block diagram of the DT-based beamforming design and vehicle assigning in an ISAC system.}
    \label{BlockDiagram}
\end{figure}

\subsection{Matched Filtering}

The sensing information of the vehicles can be extracted using matched filtering. For the rest of this subsection, we only consider the connected links between the vehicles and the RSUs. Using matched filtering method, $\nu_{[i,k],n}$ and $\mu_{[i,k],n}$ can be estimated by the following equation:
\begin{equation}
\begin{aligned}
&\{\tilde{\nu}_{[i,k],n}, \tilde{\mu}_{[i,k],n}\} = \\
&\arg \max_{\nu,\mu}\left|\int_{0}^{T_s} \mathbf{r}_{[i,k],n}(t) s^*_{[i,k],n}(t-\nu_{[i,k],n})e^{-j2\pi \mu_{[i,k],n} t}dt\right|^2,
\end{aligned}
\end{equation}
where $T_s$ is the duration of the ISAC signal, which should be lower than the slot duration $T$. The matched filter output of $\mathbf{r}_{[i,k],n}(t)$ can be written as
\begin{equation}
\begin{aligned}
    \tilde{\mathbf{r}}_{[i,k],n}=&\int_{0}^{T_s} \mathbf{r}_{[i,k],n}(t) s^*_{[i,k],n}(t-\tilde{\nu}_{[i,k],n})e^{-j2\pi \tilde{\mu}_{[i,k],n} t}dt\\
    =& \kappa \beta_{[i,k],n} \mathbf{b}(\varphi_{[i,k],n}) \mathbf{a}^{H}(\varphi_{[i,k],n})\mathbf{f}_{[i,k],n} \\
    &\cdot \int_{0}^{T_s}s_{[i,k],n}(t-\nu_{[i,k],n})s^*_{k,n}(t-\tilde{\nu}_{[i,k],n})\\
    &\cdot e^{-j2\pi(\tilde{\mu}_{[i,k],n}-\mu_{[i,k],n})t}dt\\
    &+\int_{0}^{T_s} \mathbf{z}_{[i,k],n}(t)s^*_{[i,k],n}(t-\tilde{\nu}_{[i,k],n})e^{-j2\pi\tilde{\mu}_{[i,k],n}t}dt\\
    =& \kappa \beta_{[i,k],n}G \mathbf{b}(\varphi_{[i,k],n})\mathbf{a}^{H}(\varphi_{[i,k],n})\mathbf{f}_{[i,k],n}+\tilde{\mathbf{z}}_{[i,k],n},
\end{aligned}
\end{equation}
where $\tilde{\mathbf{z}}_{[i,k],n}\sim \mathcal{CN} (0,\sigma_{\mathbf{r},[i,k],n}^2\mathbf{I}_{N_R})$ denotes the measurement noise after matched filtering with $\mathbf{I}_{N_R}$ representing a $N_R\times N_R$ identity matrix, and $G=\int_{0}^{T_s}s_{k,n}(t-\nu_{[i,k],n})s^*_{k,n}(t-\tilde{\nu}_{[i,k],n})e^{-j2\pi(\tilde{\mu}_{[i,k],n}-\mu_{[i,k],n})t}dt$ represents the matched filtering gain.


Note that $d_{[i,k],n}$ and $\dot{v}_{[i,k],n}$ can be reflected by $\tilde{\nu}_{[i,k],n}$ and $\tilde{\mu}_{[i,k],n}$, which yields:
\begin{equation}
    \tilde{\nu}_{[i,k],n} = \frac{2d_{[i,k],n}}{c}+\varepsilon_{[i,k],n},
\end{equation}
\begin{equation}
    \tilde{\mu}_{[i,k],n} = \frac{2\dot{v}_{[i,k],n}f_c}{c}+\varrho_{[i,k],n},
\end{equation}
where $f_c$ is the carrier frequency, $c$ is the speed of light, $\varepsilon_{[i,k],n}\sim\mathcal{N}(0,\sigma_{\nu,[i,k],n}^2)$ and $\varrho_{[i,k],n}\sim\mathcal{N}(0,\sigma_{\mu,[i,k],n}^2)$ are the estimation errors of $\tilde{\nu}_{[i,k],n}$ and $\tilde{\mu}_{[i,k],n}$, with noise variance being $\sigma_{\nu,[i,k],n}^2$ and $\sigma_{\mu,[i,k],n}^2$, respectively.

After matched filtering, the desired signal is amplified and the interference is filtered. The measurement noise can be seen as inversely proportional to the signal-to-noise ratio (SNR) at the receive antenna \cite{10.5555/151045}, which yields:
\begin{equation}
    \sigma_{\mathbf{r},[i,k],n}^2 = \frac{\rho_{\mathbf{r}}^2\sigma_e^2}{G},
\end{equation}
\begin{equation}
    \sigma_{\nu,[i,k],n}^2 = \frac{\rho_\nu^2\sigma_e^2}{G\kappa^2 |\beta_{[i,k],n}|^2|\eta_{[i,k],n}|^2},
\end{equation}
\begin{equation}
    \sigma_{\mu,[i,k],n}^2 = \frac{\rho_\mu^2\sigma_e^2}{G\kappa^2 |\beta_{[i,k],n}|^2|\eta_{[i,k],n}|^2},
\end{equation}
where $\eta_{[i,k],n}=\mathbf{a}^{H}(\varphi_{[i,k],n})\mathbf{f}_{[i,k],n}$ represent the beamforming gain factor, $\rho_{\mathbf{r}}$, $\rho_\nu$ and $\rho_\mu$ are constant indexes determined by the system configuration.

\subsection{State Evolution}

\begin{figure}[!t]
    \centering
    \includegraphics[width=0.9\columnwidth]{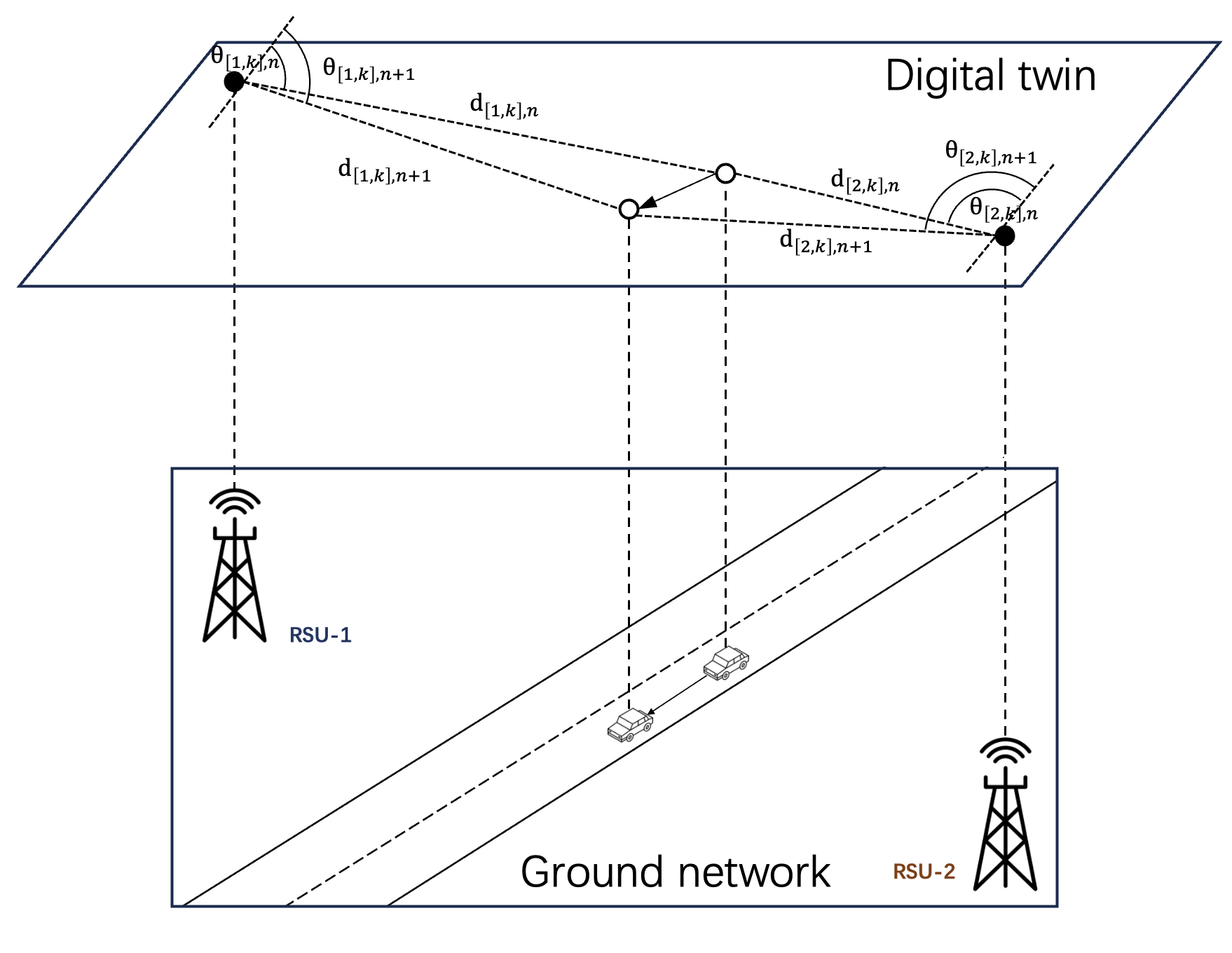}
    \caption{The kinematic model of a moving vehicle in the network.}
    \label{kinematics}
\end{figure}

At each time slot, the state estimation is based on the observation of radar echoes. The kinematic model of a moving vehicle in the network relative to the RSU is shown in Fig. \ref{kinematics}.
The prediction of $\mathbf{x}_{[i,k],n}$, denoted by $\hat{\mathbf{x}}_{[i,k],n}$, is obtained based on the measurement at time slot $n-1$, i.e., $\tilde{\mathbf{x}}_{[i,k],n-1}$. $\mathbf{x}_{[i,k],n}$ is a function of $\mathbf{x}_{[i,k],n-1}$, which yields:
\begin{equation}
    \mathbf{x}_{[i,k],n} = \mathbf{g}(\mathbf{x}_{[i,k],n-1})+\mathbf{\epsilon}_x,
\label{g}
\end{equation}
where function $\mathbf{g}(\cdot)$ can be written as (\ref{State Evo}), as shown at the top of the next page, and $\mathbf{\epsilon}_x=[\epsilon_{\varphi},\epsilon_{d},\epsilon_{\dot{v}}]^T$ is the prediction noise, with $\epsilon_{\varphi}\sim\mathcal{N}(0,\sigma_{\varphi}^2)$, $\epsilon_{d}\sim\mathcal{N}(0,\sigma_{d}^2)$, $\epsilon_{\dot{v}}\sim\mathcal{N}(0,\sigma_{\dot{v}}^2)$.

Since the vehicles are driven on a highway and the duration of each time slot is relatively short, it is reasonable to assume that the direction of the movement is along the highway and the velocities of the same vehicle at adjacent time slots are equal, i.e., $\dot{v}_{[i,k],n-1}/\varphi_{[i,k],n-1} = \dot{v}_{[i,k],n}/\varphi_{[i,k],n}$.

\newcounter{mytempeqncnt}
\begin{figure*}[t!]
\normalsize
\setcounter{mytempeqncnt}{\value{equation}}

\begin{equation}
\left\{
\begin{aligned}
& \varphi_{[i,k],n} = \frac{d_{[i,k],n-1}\dot{v}_{[i,k],n-1} T-(\dot{v}_{[i,k],n-1} T/\varphi_{[i,k],n-1})^2}{\dot{v}_{[i,k],n-1} T/\varphi_{[i,k],n-1} \times \sqrt{d_{[i,k],n-1}^2+(\dot{v}_{[i,k],n-1} T/\varphi_{[i,k],n-1})^2-2d_{[i,k],n-1}\dot{v}_{[i,k],n-1} T}}\\
& d_{[i,k],n} = \sqrt{d_{[i,k],n-1}^2+(\dot{v}_{[i,k],n-1} T/\varphi_{[i,k],n-1})^2-2d_{[i,k],n-1}\dot{v}_{[i,k],n-1} T}\\
& \dot{v}_{[i,k],n} = \frac{d_{[i,k],n-1}\dot{v}_{[i,k],n-1} T-(\dot{v}_{[i,k],n-1} T/\varphi_{[i,k],n-1})^2}{\dot{v}_{[i,k],n-1} T \times \sqrt{d_{[i,k],n-1}^2+(\dot{v}_{[i,k],n-1} T/\varphi_{[i,k],n-1})^2-2d_{[i,k],n-1}\dot{v}_{[i,k],n-1} T}}\times \dot{v}_{[i,k],n-1}
\end{aligned}
\right.
\label{State Evo}
\end{equation}
\hrulefill
\end{figure*}

\subsection{Extended Kalman Filter}

As the movement of the vehicles is continuous, the RSUs need to continuously track the vehicles for a certain period. Without sufficient information, sensing performance is limited. However, accuracy is expected to improve over time based on previous measurements. Therefore, this work applies the Kalman filter to track the vehicles. Since the observation function is non-linear, the standard Kalman filter is not applicable in this work. Instead, an EKF is utilized to estimate sensing errors. $\mathbf{G}_{[i,k],n}$ is defined as the Jacobian matrix of $\mathbf{g}(\tilde{\mathbf{x}}_{[i,k],n-1})$, which can be expressed as
\begin{equation}
\mathbf{G}_{[i,k],n} = 
\begin{bmatrix}
\frac{\partial \varphi_{[i,k],n}}{\partial \tilde{\varphi}_{[i,k],n-1}} & \frac{\partial \varphi_{[i,k],n}}{\partial \tilde{d}_{[i,k],n-1}} & \frac{\partial \varphi_{[i,k],n}}{\partial \tilde{\dot{v}}_{[i,k],n-1}} \\
\frac{\partial d_{[i,k],n}}{\partial \tilde{\varphi}_{[i,k],n-1}} & \frac{\partial d_{[i,k],n}}{\partial \tilde{d}_{[i,k],n-1}} & \frac{\partial d_{[i,k],n}}{\partial \tilde{\dot{v}}_{[i,k],n-1}} \\
\frac{\partial \dot{v}_{[i,k],n}}{\partial \tilde{\varphi}_{[i,k],n-1}}& \frac{\partial \dot{v}_{[i,k],n}}{\partial \tilde{d}_{[i,k],n-1}} &\frac{\partial \dot{v}_{[i,k],n}}{\partial \tilde{\dot{v}}_{[i,k],n-1}}
\end{bmatrix}.
\end{equation}

Since $\mathbf{y}_{[i,k],n}$ is a function of $\mathbf{x}_{[i,k],n-1}$, we have:
\begin{equation}
    \mathbf{y}_{[i,k],n} = \mathbf{h}(\mathbf{x}_{[i,k],n})+\mathbf{\chi}_{[i,k],n},
\end{equation}
where $\mathbf{\chi}_{[i,k],n}\in \mathbb{C}^{(N_t+2)\times 1}$ is the measurement noise. The Jacobian matrix of $\mathbf{h}(\mathbf{x}_{[i,k],n})$, i.e., $\mathbf{H}_{[i,k],n}$, can be expressed as
\begin{equation}
\begin{aligned}
\mathbf{H}_{[i,k],n}&=\frac{\partial \mathbf{h}(\mathbf{x}_{[i,k],n})}{\partial \mathbf{x}_{[i,k],n}}\\
&=
\begin{bmatrix}
\frac{\partial \tilde{\mathbf{r}}_{[i,k],n}}{\partial \varphi_{[i,k],n}} & 0 & 0 \\
0 & \frac{2}{c} & 0 \\
0 & 0 & \frac{2f_c}{c} 
\end{bmatrix}.
\end{aligned}
\end{equation}

By taking the first-order Taylor expansion, the state transition function and the observation function can be approximately linearized by
\begin{equation}
\begin{aligned}
    \mathbf{x}_{[i,k],n} \approx & \:\mathbf{G}_{[i,k],n} \mathbf{x}_{[i,k],n-1} + \mathbf{g}(\tilde{\mathbf{x}}_{[i,k],n-1})\\
    &-\mathbf{G}_{[i,k],n-1} \tilde{\mathbf{x}}_{[i,k],n-1}+\epsilon_{x},
\end{aligned}
\end{equation}
\begin{equation}
\begin{aligned}
    \mathbf{y}_{[i,k],n}\approx & \: \mathbf{H}_{[i,k],n} \mathbf{x}_{[i,k],n} + \mathbf{h}(\hat{\mathbf{x}}_{[i,k],n})\\
    &- \mathbf{H}_{[i,k],n} \hat{\mathbf{x}}_{[i,k],n}+ \mathbf{\chi}_{[i,k],n}.
\end{aligned}
\end{equation}

The state prediction MSE matrix $\hat{\mathbf{M}}_{[i,k],n}$ can be expressed as
\begin{equation}
\begin{aligned}
    \hat{\mathbf{M}}_{[i,k],n}  =& \mathbb{E}[(\mathbf{x}_{[i,k],n}-\hat{\mathbf{x}}_{[i,k],n})(\mathbf{x}_{[i,k],n}-\hat{\mathbf{x}}_{[i,k],n})^H]\\
    =&\mathbf{G}_{[i,k],n} \tilde{\mathbf{M}}_{[i,k],n-1}\mathbf{G}_{[i,k],n}^H + \mathbf{E}_{[i,k],n},
\end{aligned}
\label{M1}
\end{equation}
where $\mathbf{E}_{[i,k],n}=\text{diag}(\sigma_{\varphi}^2,\sigma_{d}^2,\sigma_{\dot{v}}^2)$ is the covariance matrix of the state prediction noise, and $\tilde{\mathbf{M}}_{[i,k],n-1}$ is the state measurement MSE matrix at the $(n-1)$-th time slot.

The Kalman gain $\mathbf{K}_{[i,k],n}$ can be expressed as
\begin{equation}
\begin{aligned}
    \mathbf{K}_{[i,k],n} = &\hat{\mathbf{M}}_{[i,k],n} \mathbf{H}_{[i,k],n}^H \\
    &(\mathbf{Q}_{[i,k],n}+\mathbf{H}_{[i,k],n} \hat{\mathbf{M}}_{[i,k],n}\mathbf{H}_{[i,k],n}^H)^{-1}.
\end{aligned}
\label{KalmanGain}
\end{equation}

We define $\mathbf{e}_{[i,k],n} = \mathbf{y}_{[i,k],n}-\mathbf{h}(\hat{\mathbf{x}}_{[i,k],n})$ as the difference between the actual received echo signal and the anticipated echo signal. To enhance the accuracy of the vehicle tracking, $\mathbf{e}_{[i,k],n}$ is measured after each time slot and subsequently updated in the Kalman gain. Through analyzing the error of the state prediction, we can obtain the following lemma.

\begin{lemma}
The error of the state prediction, i.e., $\mathbf{x}_{[i,k],n} - \hat{\mathbf{x}}_{[i,k],n}$ equals to $\mathbf{H}_{L,[i,k],n}^{-1}(\mathbf{e}_{[i,k],n}-\mathbf{\chi}_{[i,k],n})$, where $\mathbf{H}_{\text{L},[i,k],n}^{-1}=(\mathbf{H}_{[i,k],n}^H\mathbf{H}_{[i,k],n})^{-1}\mathbf{H}_{[i,k],n}^H$ is the left inverse of $\mathbf{H}_{[i,k],n}$.
\end{lemma}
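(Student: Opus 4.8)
The plan is to work directly from the linearized observation model and invert it algebraically. First I would recall the first-order Taylor approximation of the observation function, namely
\begin{equation}
\mathbf{y}_{[i,k],n}\approx \mathbf{H}_{[i,k],n}\mathbf{x}_{[i,k],n} + \mathbf{h}(\hat{\mathbf{x}}_{[i,k],n}) - \mathbf{H}_{[i,k],n}\hat{\mathbf{x}}_{[i,k],n} + \mathbf{\chi}_{[i,k],n},
\end{equation}
and then substitute it into the definition $\mathbf{e}_{[i,k],n} = \mathbf{y}_{[i,k],n}-\mathbf{h}(\hat{\mathbf{x}}_{[i,k],n})$. The terms $\mathbf{h}(\hat{\mathbf{x}}_{[i,k],n})$ cancel, and grouping the remaining two $\mathbf{H}_{[i,k],n}$-terms gives the compact identity $\mathbf{e}_{[i,k],n} = \mathbf{H}_{[i,k],n}\big(\mathbf{x}_{[i,k],n}-\hat{\mathbf{x}}_{[i,k],n}\big) + \mathbf{\chi}_{[i,k],n}$, i.e. $\mathbf{e}_{[i,k],n}-\mathbf{\chi}_{[i,k],n} = \mathbf{H}_{[i,k],n}\big(\mathbf{x}_{[i,k],n}-\hat{\mathbf{x}}_{[i,k],n}\big)$.

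Second, I would recover $\mathbf{x}_{[i,k],n}-\hat{\mathbf{x}}_{[i,k],n}$ from this relation. Since $\mathbf{H}_{[i,k],n}$ is an $(N_t+2)\times 3$ matrix, it is tall; assuming it has full column rank (which follows from the block structure of $\mathbf{H}_{[i,k],n}$ displayed earlier, whose lower $2\times2$ block is $\mathrm{diag}(2/c,\,2f_c/c)$ and whose first column contains the nonzero derivative $\partial\tilde{\mathbf{r}}_{[i,k],n}/\partial\varphi_{[i,k],n}$), the Gram matrix $\mathbf{H}_{[i,k],n}^H\mathbf{H}_{[i,k],n}$ is invertible and the left inverse $\mathbf{H}_{\mathrm{L},[i,k],n}^{-1}=(\mathbf{H}_{[i,k],n}^H\mathbf{H}_{[i,k],n})^{-1}\mathbf{H}_{[i,k],n}^H$ satisfies $\mathbf{H}_{\mathrm{L},[i,k],n}^{-1}\mathbf{H}_{[i,k],n}=\mathbf{I}_3$. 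Left-multiplying the identity from the previous step by $\mathbf{H}_{\mathrm{L},[i,k],n}^{-1}$ then yields $\mathbf{H}_{\mathrm{L},[i,k],n}^{-1}(\mathbf{e}_{[i,k],n}-\mathbf{\chi}_{[i,k],n}) = \mathbf{x}_{[i,k],n}-\hat{\mathbf{x}}_{[i,k],n}$, which is exactly the claim.

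The computation is essentially a two-line algebraic manipulation, so there is no serious analytical obstacle; the only point that needs care is the justification that $\mathbf{H}_{[i,k],n}$ has full column rank, so that the left inverse is well defined and $\mathbf{H}_{\mathrm{L},[i,k],n}^{-1}\mathbf{H}_{[i,k],n}=\mathbf{I}$ genuinely holds. I would also note explicitly that the equality is understood within the first-order linearization already adopted for the EKF (the ``$\approx$'' in the Taylor expansions), so the lemma is exact for the linearized model and approximate for the true nonlinear one. If one wanted to be more careful about the complex/Hermitian setting, I would remark that the steering-vector entries make $\mathbf{H}_{[i,k],n}$ complex, but the argument is unchanged since only the conjugate-transpose and invertibility of the Gram matrix are used.
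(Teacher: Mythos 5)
Your proposal is correct and follows essentially the same route as the paper: substitute the linearized observation model into the definition of $\mathbf{e}_{[i,k],n}$ to obtain $\mathbf{e}_{[i,k],n} = \mathbf{H}_{[i,k],n}(\mathbf{x}_{[i,k],n}-\hat{\mathbf{x}}_{[i,k],n}) + \mathbf{\chi}_{[i,k],n}$, then apply the left inverse. Your added remarks on the full-column-rank condition and on the equality holding only for the linearized model are careful touches the paper omits, but they do not change the argument.
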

\begin{proof}
$\mathbf{e}_{[i,k],n}$ denotes the difference between $\mathbf{y}_{[i,k],n}$ and $\mathbf{h}(\hat{\mathbf{x}}_{[i,k],n})$ and can be expressed as
\begin{equation}
\begin{aligned}
    \mathbf{e}_{[i,k],n} = &\mathbf{H}_{[i,k],n} \mathbf{x}_{[i,k],n} - \mathbf{H}_{[i,k],n}\hat{\mathbf{x}}_{[i,k],n}  + \mathbf{\chi}_{[i,k],n} \\
    =& \mathbf{H}_{[i,k],n} (\mathbf{x}_{[i,k],n}-\hat{\mathbf{x}}_{[i,k],n}) + \mathbf{\chi}_{[i,k],n}.
\end{aligned}
\end{equation}
Then, we have:
\begin{equation}
\mathbf{x}_{[i,k],n}-\hat{\mathbf{x}}_{[i,k],n} = \mathbf{H}_{L,[i,k],n}^{-1}(\mathbf{e}_{[i,k],n}-\mathbf{\chi}_{[i,k],n}).
\end{equation}

Hence, Lemma 1 is proved.
\end{proof}

The updated error matrix can be expressed as

\begin{equation}
\begin{aligned}
&\mathbb{E}[(\mathbf{x}_{[i,k],n}-\hat{\mathbf{x}}_{[i,k],n})(\mathbf{x}_{[i,k],n}-\hat{\mathbf{x}}_{[i,k],n})^H]\\
& = \mathbb{E}\Big[\mathbf{H}_{L,[i,k],n}^{-1}(\mathbf{e}_{[i,k],n}-\mathbf{\chi}_{[i,k],n}) \\
& \:\:\:\:\:\:(\mathbf{e}_{[i,k],n}-\mathbf{\chi}_{[i,k],n})^H(\mathbf{H}_{L,[i,k],n}^{-1})^H\Big]\\
& = \mathbf{H}_{\text{L},[i,k],n}^{-1} \Big(\mathbf{e}_{[i,k],n}\mathbf{e}_{[i,k],n}^H - \mathbb{E}[\mathbf{e}_{[i,k],n}\mathbf{\chi}_{[i,k],n}^H]\\
&\:\:\:\:\: - \mathbb{E}[\mathbf{\chi}_{[i,k],n}\mathbf{e}_{[i,k],n}^H]+ \mathbb{E}[\mathbf{\chi}_{[i,k],n}\mathbf{\chi}_{[i,k],n}^H] \Big) (\mathbf{H}_{\text{L},[i,k],n}^{-1})^{H} \\
& = \mathbf{H}_{\text{L},[i,k],n}^{-1}(\mathbf{e}_{[i,k],n}\mathbf{e}_{[i,k],n}^H-\mathbf{Q}_{[i,k],n})(\mathbf{H}_{\text{L},[i,k],n}^{-1})^H,
\end{aligned}
\label{M_pre}
\end{equation}
where $\mathbf{Q}_{[i,k],n}=\text{diag}(\sigma_{\mathbf{r},[i,k],n}^2\mathbf{1}_{N_r}^T,\sigma_{\nu,[i,k],n}^2,\sigma_{\mu,[i,k],n}^2)$ with $\mathbf{1}_{N_r}$ representing a size-$N_r$ all one column vector.

Substituting (\ref{M_pre}) into (\ref{KalmanGain}), the updated Kalman gain can be written as
\begin{equation}
\begin{aligned}
    &\mathbf{K}_{[i,k],n} = \mathbf{H}_{\text{L},[i,k],n}^{-1}(\mathbf{e}_{[i,k],n}\mathbf{e}_{[i,k],n}^H-\mathbf{Q}_{[i,k],n})\\
    &(\mathbf{H}_{[i,k],n}\mathbf{H}_{\text{L},[i,k],n}^{-1})^H\Big(\mathbf{Q}_{[i,k],n}+\mathbf{H}_{[i,k],n}\mathbf{H}_{\text{L},[i,k],n}^{-1}\\
    &(\mathbf{e}_{[i,k],n}\mathbf{e}_{[i,k],n}^H-\mathbf{Q}_{[i,k],n})(\mathbf{H}_{[i,k],n}\mathbf{H}_{\text{L},[i,k],n}^{-1})^H \Big)^{-1}.
\end{aligned}
\label{Kalman gain}
\end{equation}

This updated Kalman gain can be used for fix some of the prediction errors at the last time slot, which yields:
\begin{equation}
    \tilde{\mathbf{x}}_{[i,k],n} = \hat{\mathbf{x}}_{[i,k],n} + \mathbf{K}_{[i,k],n} \mathbf{e}_{[i,k],n}.
\end{equation}

Finally, the MSE matrix of the state measurement can be computed by
\begin{equation}
\begin{aligned}
    \tilde{\mathbf{M}}_{[i,k],n} =& \mathbb{E} [(\mathbf{x}_{[i,k],n} - \tilde{\mathbf{x}}_{[i,k],n})(\mathbf{x}_{[i,k],n} - \tilde{\mathbf{x}}_{[i,k],n})^H]\\
    =& \mathbb{E} [(\mathbf{x}_{[i,k],n} -\hat{\mathbf{x}}_{[i,k],n} - \mathbf{K}_{[i,k],n} \mathbf{e}_{[i,k],n})\\
    &\times (\mathbf{x}_{[i,k],n} - \hat{\mathbf{x}}_{[i,k],n} - \mathbf{K}_{[i,k],n} \mathbf{e}_{[i,k],n})^H]\\
    =& \mathbb{E}[(\mathbf{x}_{[i,k],n}-\hat{\mathbf{x}}_{[i,k],n})(\mathbf{x}_{[i,k],n}-\hat{\mathbf{x}}_{[i,k],n})^H] \\
    &- \mathbb{E}[\mathbf{K}_{[i,k],n}\mathbf{e}_{[i,k],n}  (\mathbf{x}_{[i,k],n} -\hat{\mathbf{x}}_{[i,k],n})^H]\\
    =& \mathbf{K}_{[i,k],n} \mathbf{Q}_{[i,k],n}(\mathbf{H}_{\text{L},[i,k],n}^{-1})^H.
\end{aligned}
\label{measurement}
\end{equation}

\begin{algorithm}[t!]
\caption{EKF-based State Prediction}
\begin{algorithmic}[1]
\State Input $\hat{\mathbf{x}}_{[i,k],n}$, $\mathbf{y}_{[i,k],n}$, $\mathbf{H}_{[i,k],n}$, $\mathbf{Q}_{[i,k],n}$.
\State Compare the difference between $\mathbf{y}_{[i,k],n}$ and $\mathbf{h}(\hat{\mathbf{x}}_{[i,k],n})$, save it as $\mathbf{e}_{[i,k],n}$.
\State Compute the Kalman gain $\mathbf{K}_{[i,k],n}$ by (\ref{Kalman gain}).
\State Based on the Kalman gain and received echo signal, correct the predicted state to achieve the measured state $\tilde{\mathbf{x}}_{[i,k],n}$ by (\ref{measurement}).
\State Predict the vehicle state at time slot $n+1$ by $\hat{\mathbf{x}}_{[i,k],n+1} = \mathbf{g}(\tilde{\mathbf{x}}_{[i,k],n})$.
\State Return $\hat{\mathbf{x}}_{[i,k],n+1}$.
\end{algorithmic}
\label{EKF}
\end{algorithm}

The DT then uses the predicted states and the MSE matrix to design the predictive beamforming and assign vehicles to different RSUs. The overall algorithm for state prediction with EKF is summarized in Algorithm \ref{EKF}.

\section{Optimization Problem for Beamforming Design and Vehicle Assignment}

Our goal is to optimize the overall transmission rate within the vehicular network by strategically assigning vehicles to different RSUs and designing predictive beamforming for both sensing and communication purposes. Once the DT predicts the location of vehicles at the next time slot, we start the beamforming design and vehicle allocation process. Ideally, we would aim to find the optimal $\mathbf{F}_{i,n}$ and $\mathbf{\xi}_n=\{\xi_{[i,k],n}\}, \forall i, \forall k$ that maximize both the communication rate and the sensing accuracy. However, there exists a clear trade-off between these objectives. Greedily designing the predictive beamforming that is highly suitable for communication results in a maximum communication rate in the next time slot. However, this design could lead to a decline in the CRB of the next time slot which will unavoidably affect the accuracy of the DT for the subsequent time slot, subsequently affecting the potential communication rate in later time slots. The challenge of establishing a direct relationship between sensing accuracy and communication rate makes the problem more complex to model.

\subsection{Posterior Cramér-Rao Lower Bound}

CRB is a fundamental concept in estimation theory and statistics. It establishes the minimum variance level for unbiased estimators of a deterministic parameter that is fixed but unknown. Specifically, the variance of any such estimator cannot be lower than the reciprocal of the Fisher information. However, in this work, the estimation errors of vehicle states depend on not only the measured parameters but also the errors inherited from the previous time slots. Therefore, the PCRB is used for finding the MSE lower bound of the vehicle state.

The conditional probability density function (PDF) of $\mathbf{y}_{[i,k],n}$ and $\mathbf{x}_{[i,k],n}$ given $\hat{\mathbf{x}}_{[i,k],n}$ can be expressed as
\begin{equation}
\begin{aligned}
p(\mathbf{x}_{[i,k],n},\mathbf{y}_{[i,k],n}&|\hat{\mathbf{x}}_{[i,k],n}) = \\
&p(\mathbf{y}_{[i,k],n}|\mathbf{x}_{[i,k],n},\hat{\mathbf{x}}_{[i,k],n})p(\mathbf{x}_{[i,k],n}|\hat{\mathbf{x}}_{[i,k],n}),
\end{aligned}
\end{equation}
where $p(\mathbf{y}_{[i,k],n}|\mathbf{x}_{[i,k],n},\hat{\mathbf{x}}_{[i,k],n})$ is the conditional PDF of $\mathbf{y}_{[i,k],n}$ given $\mathbf{x}_{[i,k],n}$ and $\hat{\mathbf{x}}_{[i,k],n}$, and $p(\mathbf{x}_{[i,k],n}|\hat{\mathbf{x}}_{[i,k],n})$ is the conditional PDF of $\mathbf{x}_{[i,k],n}$ given $\hat{\mathbf{x}}_{[i,k],n}$. According to the measurement model, $p(\mathbf{y}_{[i,k],n}|\mathbf{x}_{[i,k],n},\hat{\mathbf{x}}_{[i,k],n})$ can be computed by
\begin{equation}
\begin{aligned}
p(\mathbf{y}_{[i,k],n}|\mathbf{x}_{[i,k],n}&,\hat{\mathbf{x}}_{[i,k],n}) = \frac{1}{\pi ^{N_r+2}\det (\mathbf{Q}_{[i,k],n})}\\
&\exp \Big(\big(\mathbf{y}_{[i,k],n} - \mathbf{h}(\mathbf{x}_{[i,k],n})\big)^H\mathbf{Q}_{[i,k],n}^{-1}\\
&(\mathbf{y}_{[i,k],n} - \mathbf{h}(\mathbf{x}_{[i,k],n}))\Big).
\end{aligned}
\end{equation}

The posterior Fisher information matrix $\mathbf{F}_{[i,k],n}$ can be computed by
\begin{equation}
\begin{aligned}
    &\mathbf{F}_{[i,k],n}\\
    &=-\mathbb{E}\left[\frac{\partial^2\ln p(\mathbf{y}_{[i,k],n}|\mathbf{x}_{[i,k],n},\hat{\mathbf{x}}_{[i,k],n})p(\mathbf{x}_{[i,k],n}|\hat{\mathbf{x}}_{[i,k],n})}{\partial \mathbf{x}_{[i,k],n}^2}  \right]\\
    &= -\mathbb{E}\left[\frac{\partial^2\ln p(\mathbf{y}_{[i,k],n}|\mathbf{x}_{[i,k],n},\hat{\mathbf{x}}_{[i,k],n})}{\partial \hat{\mathbf{x}}_{[i,k],n}^2}  \right]\\
    &\:\:\:\:\:\: -\mathbb{E}\left[\frac{\partial^2\ln p(\mathbf{x}_{[i,k],n}|\hat{\mathbf{x}}_{[i,k],n})}{\partial \mathbf{x}_{[i,k],n}^2}   \right]\\
    &= \mathbf{H}_{[i,k],n}\mathbf{Q}_{[i,k],n}^{-1}\mathbf{H}_{[i,k],n}^H+ \hat{\mathbf{M}}_{[i,k],n}^{-1}.
\end{aligned}
\end{equation}

The PCRB matrix is equivalent to the state measurement MSE matrix, hence we have:
\begin{equation}
\begin{aligned}
    &\tilde{\mathbf{M}}_{[i,k],n} = \\
    &(\mathbf{H}_{[i,k],n}\mathbf{Q}_{[i,k],n}^{-1}\mathbf{H}_{[i,k],n}^H+ (\mathbf{G}_{[i,k],n} \tilde{\mathbf{M}}_{[i,k],n-1}\mathbf{G}_{[i,k],n}^H)^{-1})^{-1},
\end{aligned}
\label{inverse}
\end{equation}
which shows the direct relationship between $\tilde{\mathbf{M}}_{[i,k],n}$ and $\tilde{\mathbf{M}}_{[i,k],n-1}$.

\subsection{Problem Formulation}

While our primary focus is on communication, ensuring a high level of sensing accuracy is crucial for constructing the DT, and, consequently, leads to reliable communication. In this work, although our goal is to maximize the overall throughput at the current time slot, we also impose a constraint on the PCRB to ensure that the sensing error does not increase over time, thereby enabling the construction of an accurate DT. Therefore, we set $\tilde{m}_{[i,k],n}^{(11)} \leq \tilde{m}_{[i,k],n-1}^{(11)}$, where $m_{[i,k],n}^{(ij)}$ denotes the $(i,j)$-th entry of $\tilde{\mathbf{M}}_{[i,k],n}$.

$\tilde{\mathbf{M}}_{[i,k],n}$ is a $3\times 3$ matrix, hence we can easily find its inverse matrix. Since we assume that the current sensing performance is good, $\eta_{[i,k],n} \approx 1$. With all the other entries being constant, we have:
\begin{equation}
    \left(\mathbf{\Pi}_{[i,k],n}\mathbf{f}_{[i,k],n}\right)^H \left(\mathbf{\Pi}_{[i,k],n}\mathbf{f}_{[i,k],n}\right) \geq \Lambda_{[i,k],n},
\label{Constraint}
\end{equation}
where $\Lambda_{[i,k],n}$ is a constant value achieved by using (\ref{inverse}) to find the inverse of the $3\times3$ matrix, and $\mathbf{\Pi}_{[i,k],n} = \partial \mathbf{b}(\varphi_{[i,k],n})\mathbf{a}^{H}(\varphi_{[i,k],n})/\partial \varphi_{[i,k],n}$. The calculation of $\Lambda_{[i,k],n}$ is detailed in the Appendix. Then, we can formulate the optimization problem as (\ref{Optimization}), shown at the top of the next page.

\begin{figure*}[t!]
\normalsize
\setcounter{mytempeqncnt}{\value{equation}}

\begin{equation}    
\begin{aligned}
    \max_{\mathbf{\xi}_n,\mathbf{F}_n}&\sum_{i=1}^{I}\sum_{k=1}^{K}\log\left(1+\frac{\kappa'^2|\alpha_{[i,k],n}||\mathbf{a}^{H}(\varphi_{[i,k],n})\mathbf{f}_{[i,k],n}|^2\xi_{[i,k],n}}{\sum_{m\neq k}^{K}\kappa'^2|\alpha_{[i,k],n}||\mathbf{a}^{H}(\varphi_{[i,k],n})\mathbf{f}_{[i,m],n}|^2\xi_{[i,m],n}+\sigma_c^2}\right)\\
    \text{subject to}\:\: &  \xi_{[i,k],n}\in \{0,1\},\forall k,\forall i\\
    &\sum_{i=1}^{I}\xi_{[i,k],n}=1, \forall k\\
    &||\mathbf{f}_{[i,k],n}||^2\leq1, \forall k, \forall i\\
    & \Lambda_{[i,k],n}-\left(\frac{\partial \mathbf{b}(\varphi_{[i,k],n})\mathbf{a}^{H}(\varphi_{[i,k],n})}{\partial \varphi_{[i,k],n}}\mathbf{f}_{[i,k],n}\right)^H \left(\frac{\partial \mathbf{b}(\varphi_{[i,k],n})\mathbf{a}^{H}(\varphi_{[i,k],n})}{\partial \varphi_{[i,k],n}}\mathbf{f}_{[i,k],n}\right) \leq 0, \forall k, \forall i
\end{aligned}
\label{Optimization}
\end{equation}
\end{figure*}

\section{Algorithm Design}

Problem (\ref{Optimization}) is difficult to solve because it is non-convex. To solve Problem (\ref{Optimization}), we use an auxiliary variable $\gamma_{[i,k],n}$ to replace $\text{SINR}_{[i,k],n}$. This allows us to formulate a new optimization problem as Problem (\ref{Optimization2}), shown at the top of the next page below Problem (\ref{Optimization}).

\begin{figure*}[t!]
\begin{equation}
\begin{aligned}
    \max_{\mathbf{\xi}_n,\mathbf{F}_n,\mathbf{\gamma}_n} & \sum_{i=1}^{I} \sum_{k=1}^{K} \log (1+\gamma_{[i,k],n})\\
    \text{subject to}\:\: &  \xi_{[i,k],n}\in \{0,1\},\forall k,\forall i\\
    &\sum_{i=1}^{I}\xi_{[i,k],n}=1, \forall k\\
    &||\mathbf{f}_{[i,k],n}||^2\leq1,\forall k,\forall i\\
    & \Lambda_{[i,k],n}-\left(\frac{\partial \mathbf{b}(\varphi_{[i,k],n})\mathbf{a}^{H}(\varphi_{[i,k],n})}{\partial \varphi_{[i,k],n}}\mathbf{f}_{[i,k],n}\right)^H \left(\frac{\partial \mathbf{b}(\varphi_{[i,k],n})\mathbf{a}^{H}(\varphi_{[i,k],n})}{\partial \varphi_{[i,k],n}}\mathbf{f}_{[i,k],n}\right) \leq 0,\forall k,\forall i\\
    &\gamma_{[i,k],n} \leq \frac{\kappa'^2|\alpha_{[i,k],n}||\mathbf{a}^{H}(\varphi_{[i,k],n})\mathbf{f}_{[i,k],n}|^2\xi_{[i,k],n}}{\sum_{m\neq k}^{K}\kappa'^2|\alpha_{[i,k],n}||\mathbf{a}^{H}(\varphi_{[i,k],n})\mathbf{f}_{[i,m],n}|^2\xi_{[i,m],n}+\sigma_c^2},\forall k,\forall i\\
\end{aligned}
\label{Optimization2}
\end{equation}
\end{figure*}

Problem (\ref{Optimization2}) can be divided into an inner optimization problem over $\mathbf{\gamma}_n$, and an outer optimization problem over $\mathbf{\xi}_n$ and $\mathbf{F}_n$. The inner optimization problem is convex over $\mathbf{\gamma}_n$, hence, strong duality holds, and we can formulate the Lagrangian function as (\ref{Lagrangian}), shown at the top of this page.

\begin{figure*}[t!]
\begin{equation}
\begin{aligned}
    L(\mathbf{\gamma}_n,\mathbf{\lambda}_n)& =  \sum_{i=1}^{I} \sum_{k=1}^{K} \log (1+\gamma_{[i,k],n})\\
    &-\sum_{i=1}^{I} \sum_{k=1}^{K} \lambda_{[i,k],n} \left(\gamma_{[i,k],n}-\frac{\kappa'^2|\alpha_{[i,k],n}||\mathbf{a}^{H}(\varphi_{[i,k],n})\mathbf{f}_{[i,k],n}|^2\xi_{[i,k],n}}{\sum_{m\neq k}^{K}\kappa'^2|\alpha_{[i,k],n}||\mathbf{a}^{H}(\varphi_{[i,k],n})\mathbf{f}_{[i,m],n}|^2\xi_{[i,m],n}+\sigma_c^2} \right)\\
\end{aligned}
\label{Lagrangian}
\end{equation}
\hrulefill
\end{figure*}

Problem (\ref{Optimization2}) is equivalent to the following dual problem:
\begin{equation}
    \min_{\mathbf{\lambda}_n\geq0} \max_{\mathbf{\gamma}_n} L(\mathbf{\gamma}_n,\mathbf{\lambda}_n).
\end{equation}

The saddle point $(\mathbf{\gamma}_n^*,\mathbf{\lambda}_n^*)$ can be found by setting $\partial L(\mathbf{\gamma}_n,\mathbf{\lambda}_n)/\partial \gamma_{[i,k],n}=0$, which yields:

\begin{equation}
\begin{aligned}
    \lambda_{[i,k],n}^* &= \frac{1}{1+\gamma_{[i,k],n}} \\
    =& \frac{\sum_{m\neq k}^{K}\kappa'^2|\alpha_{[i,k],n}||\mathbf{a}^{H}(\varphi_{[i,k],n})\mathbf{f}_{[i,m],n}|^2\xi_{[i,m],n}+\sigma_c^2}{\sum_{m=1}^{K}\kappa'^2|\alpha_{[i,k],n}||\mathbf{a}^{H}(\varphi_{[i,k],n})\mathbf{f}_{[i,m],n}|^2\xi_{[i,m],n}+\sigma_c^2}.
\end{aligned}
\label{AfterL}
\end{equation}

Since (\ref{AfterL}) is convex with respect to $\mathbf{\gamma}_n$, while the other parameters are fixed, the optimal $\mathbf{\gamma}_n$, i.e., $\mathbf{\gamma}_n^*$ can be calculated by

\begin{equation}
\begin{aligned}
    \gamma_{[i,k],n}^* &=\\
    &\frac{\kappa'^2|\alpha_{[i,k],n}||\mathbf{a}^{H}(\varphi_{[i,k],n})\mathbf{f}_{[i,k],n}|^2\xi_{[i,k],n}}{\sum_{m\neq1}^K \kappa'^2|\alpha_{[i,k],n}||\mathbf{a}^{H}(\varphi_{[i,k],n})\mathbf{f}_{[i,m],n}|^2\xi_{[i,k],n}+\sigma_c^2}.
    \label{gamma}
\end{aligned}
\end{equation}

Once $\mathbf{\gamma}_n$ is fixed, we use fractional programming \cite{Shen} to eliminate the fractional term in the objective function. We introduce a set of auxiliary variables $y_{[i,k],n}$, which enables us to express the transformed objective function as a new function denoted by $f_q(\mathbf{\xi}_n,\mathbf{F}_n,\mathbf{\gamma}_n,\mathbf{Y}_n)$, as shown by (\ref{fq}) at the top of the next page, where $\mathbf{Y}_n$ is the set of $\{y_{[i,k],n}\}$. Problem (\ref{Optimization2}) is equivalent to maximizing $f_q(\mathbf{\xi}_n,\mathbf{F}_n,\mathbf{\gamma}_n^*,\mathbf{Y}_n)$, hence we can formulate an optimization problem on $f_q(\mathbf{\xi}_n,\mathbf{F}_n,\mathbf{\gamma}_n^*,\mathbf{Y}_n)$, as shown by (\ref{Opt3}) at the top of the next page below (\ref{fq}).

\begin{figure*}[t!]
\begin{equation}
\begin{aligned}
    f_q(\mathbf{\xi}_n,\mathbf{F}_n,\mathbf{\gamma}_n,\mathbf{Y}_n)& = \sum_{i=1}^I \sum_{k=1}^K \Bigg( \log (1+\gamma_{[i,k],n})-\gamma_{[i,k],n}\\
    &+2y_{[i,k],n}\sqrt{1+\gamma_{[i,k],n}}\kappa'\sqrt{\alpha_{[i,k],n}}\mathbf{a}^{H}(\varphi_{[i,k],n})\mathbf{f}_{[i,k],n}\xi_{[i,k],n}\\
    &-\sum_{m=1}^Ky_{[i,k],n}^2\Big(\kappa'^2|\alpha_{[i,k],n}||\mathbf{a}^{H}(\varphi_{[i,k],n})\mathbf{f}_{[i,m],n}|^2\xi_{[i,m],n}+\sigma_c^2\Big)\Bigg)\\
\end{aligned}
\label{fq}
\end{equation}
\end{figure*}

\begin{figure*}[t!]
\begin{equation}
\begin{aligned}
    \max_{\mathbf{\xi}_{n},\mathbf{F}_n,\mathbf{Y}_n} & f_q(\mathbf{\xi}_n,\mathbf{F}_n,\mathbf{\gamma}_n^*,\mathbf{Y}_n) \\
    \text{subject to}\:\: &  \xi_{[i,k],n}\in \{0,1\} ,\forall k,\forall i\\
    &\sum_{i=1}^{I}\xi_{[i,k],n}=1 ,\forall k\\
    &\mathbf{f}_{[i,k],n}^H \mathbf{f}_{[i,k],n} \leq1 ,\forall k,\forall i\\
    & \Lambda_{[i,k],n}-\left(\frac{\partial \mathbf{b}(\varphi_{[i,k],n})\mathbf{a}^{H}(\varphi_{[i,k],n})}{\partial \varphi_{[i,k],n}}\mathbf{f}_{[i,k],n}\right)^H \left(\frac{\partial \mathbf{b}(\varphi_{[i,k],n})\mathbf{a}^{H}(\varphi_{[i,k],n})}{\partial \varphi_{[i,k],n}}\mathbf{f}_{[i,k],n}\right) \leq 0,\forall k,\forall i
\end{aligned}
\label{Opt3}
\end{equation}
\end{figure*}

To solve Problem (\ref{Opt3}), we use a heuristic algorithm that iteratively fixes the other parameters and optimizes one parameter at a time. As $f_q(\mathbf{\xi}_n,\mathbf{F}_n,\mathbf{\gamma}_n^*,\mathbf{Y}_n)$ is a convex function of $y_{[i,k],n}$, the optimal $y_{[i,k],n}$, i.e., $y_{[i,k],n}^*$ can be achieved by setting $\partial f_q(\mathbf{\xi}_n,\mathbf{F}_n,\mathbf{\gamma}_n,\mathbf{Y}_n)/\partial y_{[i,k],n}=0$. Then we have:
\begin{equation}
\begin{aligned}
    y_{[i,k],n}^*& =\\
    &\frac{\kappa|\beta_{[i,k],n}||\mathbf{a}^{H}(\varphi_{[i,k],n})\mathbf{f}_{[i,k],n}|\xi_{[i,k],n}}{\sum_{m\neq k}^K\kappa^2|\beta_{[i,m],n}|^2|\mathbf{a}^{H}(\varphi_{[i,k],n})\mathbf{f}_{[i,m],n}|^2\xi_{[i,m],n}+\sigma_z^2}.
    \label{x}
\end{aligned}
\end{equation}

\begin{figure*}[t!]

\begin{equation}
\begin{aligned}
    \zeta_{[i,k],n} =& \log(1+\gamma_{[i,k],n})-\gamma_{[i,k],n}+2y_{[i,k],n}\sqrt{1+\gamma_{[i,k],n}}\kappa'\sqrt{\alpha_{[i,k],n}}\mathbf{a}^{H}(\varphi_{[i,k],n})\mathbf{f}_{[i,k],n}\xi_{[i,k],n}\\
    &-\sum_{m=1}^Ky_{[i,k],n}^2\Big(\kappa'^2|\alpha_{[i,k],n}||\mathbf{a}^{H}(\varphi_{[i,k],n})\mathbf{f}_{[i,m],n}|^2\xi_{[i,m],n}+\sigma_c^2\Big)
\end{aligned}
\label{zeta}
\end{equation}
\hrulefill
\end{figure*}

With $y_{[i,k],n}$ being fixed too, the objective function and all the constraints of Problem (\ref{Opt3}) are convex functions with respect to $\mathbf{F}_n$. Hence, we can use the projected gradient descent (PGD) algorithm to find the solution.

The optimization of vehicle assignments is a binary assignment problem. We present two methods to optimize the matching between RSUs and vehicles. Firstly, a greedy algorithm is proposed that sequentially allocates vehicles for optimal performance. Secondly, a heuristic algorithm is proposed for assigning the vehicles to the RSUs. We define an index $\zeta_{[i,k],n}$ as (\ref{zeta}), as shown in the next page below (\ref{Opt3}), representing the comprehensive utility and penalty arising from the allocation of vehicle $k$ to RSU $i$ during time slot $n$.

\begin{algorithm}[t!]
\caption{Greedy Algorithm}
\begin{algorithmic}[1]
\State Define $\mathcal{V}_{un}$ as the set of vehicles that are yet unallocated in the system. Initialize set $\mathcal{V}=\emptyset$. 
\State Sort the vehicles in set $\mathcal{V}_{un}$ based on $|1/d_{[1,k],n}-1/d_{[2,k],n}|^2$.
\State Assign the first vehicle $v_1$ in $\mathcal{V}_{un}$ to the RSU which is closer to it. Update $\xi_{[i,1],n}$, and update vehicle $v_1$ to $\mathcal{V}$, then remove $v_1$ from $\mathcal{V}_{un}$.
\Repeat
\State Select the first vehicle $v_k$ in $\mathcal{V}_{un}$. 
\State Initialize $\mathbf{F}_n$ such that $\mathbf{f}_{[i,k],n}=\mathbf{a}(\varphi_{[i,k],n})$.
\Repeat
\State Update $\mathbf{\gamma}_n$ by (\ref{gamma}).
\State Update $\mathbf{Y}_n$ by (\ref{x}).
\State Find the optimal $\mathbf{F}_n$ using PGD algorithm.
\Until{the value of $f_q(\mathbf{\xi}_n,\mathbf{F}_n,\mathbf{\gamma}_n,\mathbf{Y}_n)$ in  (\ref{fq}) converges.}
\State Compare $\zeta_{[1,k],n}$ and $\zeta_{[2,k],n}$ with respect to the vehicles set $\mathcal{V}$.
\If{$\zeta_{[1,k],n} > \zeta_{[2,k],n} $}
\State Set $\xi_{[1,k],n}=1$ and $\xi_{[2,k],n}=0$.
\Else 
\State Set $\xi_{[1,k],n}=0$ and $\xi_{[2,k],n}=1$.
\EndIf
\State Update $v_k$ to $\mathcal{V}$, then remove it from $\mathcal{V}_{un}$.
\Until{$\mathcal{V}_{un}$ is empty.}
\end{algorithmic}
\label{Greedy}
\end{algorithm}

\subsubsection{Greedy Algorithm}

In this algorithm, vehicles are allocated to RSUs one by one. The algorithm starts with the vehicle that exhibits the largest difference in distances between itself and both RSUs, considering this vehicle suffers from the greatest impact of path loss. Let $\mathcal{V}$ represent the set of vehicles in the system. The greedy algorithm is shown in Algorithm \ref{Greedy}. Through this approach, a reasonable assignment strategy can be established by assigning the vehicles one by one.

\subsubsection{Heuristic Algorithm}

We formulate the following optimization problem to optimize $\mathbf{\xi}_n$: 

\begin{equation}
\begin{aligned}
    \max_{\mathbf{\xi}_n} & \sum_{i=1}^I\sum_{k=1}^K \xi_{[i,k],n}\zeta_{[i,k],n}\\
    \text{subject to}\:\: &  \xi_{[i,k],n}\in \{0,1\},\forall k,\forall i\\
    &\sum_{i=1}^{I}\xi_{[i,k],n}=1,\forall k.\\
\end{aligned}
\label{Hg}
\end{equation}

\begin{algorithm}[t!]
\caption{Heuristic Optimization Algorithm}
\begin{algorithmic}[1]
\State Initialize $\mathbf{\xi}_n$ based on the distances between each vehicle and both RSUs and initialize $\mathbf{F}_n$ such that $\mathbf{f}_{[i,k],n}=\mathbf{a}(\varphi_{[i,k],n})$.
\State Compute $\zeta_{[i,k],n}, \forall i \in \{1,2\}, \forall k \in [1,K]$.
\State Compute $e_{k,n} = \sum_{i=1}^{2}\zeta_{[i,k],n}(1-2\xi_{[i,k],n}),\forall k \in [1,K]$.
\While {$\exists k, e_{k,n} > 0$}
\State Select $k^*$ that gives the maximum $e_{k,n}$.
\State Update $\xi_{[i,k],n} \leftarrow 1-\xi_{[i,k],n}$
\Repeat
\State Update $\mathbf{\gamma}_n$ by (\ref{gamma}).
\State Update $\mathbf{Y}_n$ by (\ref{x}).
\State Find the optimal $\mathbf{F}_n$ using PGD algorithm.
\Until{The value of $f_q(\mathbf{\xi}_n,\mathbf{F}_n,\mathbf{\gamma}_n,\mathbf{Y}_n)$ converges.}
\State Update $\zeta_{[i,k],n}, \forall i \in \{1,2\}, \forall k \in [1,K]$.
\State Update $e_{k,n}, \forall k \in [1,K]$.
\EndWhile
\end{algorithmic}
\label{heuristic}
\end{algorithm}

This optimization problem is challenging to solve as every single change in $\mathbf{\xi}_{n}$ results in a series of changes in $\zeta_{[i,k],n}$. Here, we employ a heuristic algorithm to find the optimal matching between the vehicles and the RSUs, as shown by Algorithm \ref{heuristic}. In each iteration, we select the vehicle that is most suitable for the other RSU and assign it to that RSU. The suitability of a vehicle for a particular RSU can be reflected by $\zeta_{[i,k],n}$. This algorithm terminates when all vehicles are assigned to the most suitable RSU.

\begin{lemma}
$f_q(\mathbf{\xi}_n,\mathbf{F}_n,\mathbf{\gamma}_n,\mathbf{Y}_n)$ converges to a stationary point.
\end{lemma}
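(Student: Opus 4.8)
The plan is to show that the inner alternating loop in Algorithms \ref{Greedy} and \ref{heuristic} (update $\mathbf{\gamma}_n$, update $\mathbf{Y}_n$, run PGD on $\mathbf{F}_n$, repeat until $f_q$ converges) generates a sequence whose objective values are monotonically non-decreasing and bounded above, hence convergent, and then to argue by a block-coordinate-ascent argument that every limit point of the iterate sequence is a stationary point of $f_q$ over the feasible set of Problem (\ref{Opt3}) (equivalently, by the quadratic-transform equivalence in \cite{Shen}, a stationary point of Problem (\ref{Optimization2}) with $\mathbf{\xi}_n$ held fixed).

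First I would establish monotonicity block by block. For fixed $\mathbf{F}_n,\mathbf{\xi}_n$, the function $f_q$ in (\ref{fq}) is separable and strictly concave in $\mathbf{\gamma}_n$ (each term $\log(1+\gamma_{[i,k],n})-\gamma_{[i,k],n}$ is strictly concave), so the update (\ref{gamma}) returns its unique constrained maximizer; for fixed $\mathbf{\gamma}_n,\mathbf{F}_n,\mathbf{\xi}_n$, $f_q$ is strictly concave and quadratic in each $y_{[i,k],n}$ (the coefficient of $y_{[i,k],n}^2$ equals $\kappa'^2|\alpha_{[i,k],n}||\mathbf{a}^{H}(\varphi_{[i,k],n})\mathbf{f}_{[i,k],n}|^2\xi_{[i,k],n}+\sigma_c^2>0$), so (\ref{x}) returns its unique maximizer; and for fixed $\mathbf{\gamma}_n,\mathbf{Y}_n,\mathbf{\xi}_n$ the PGD step performs ascent on the $\mathbf{F}_n$-subproblem of (\ref{Opt3}). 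Each sub-step therefore does not decrease $f_q$, so the objective sequence is non-decreasing. Boundedness above then follows because $\|\mathbf{f}_{[i,k],n}\|^2\le 1$ makes the feasible set for $\mathbf{F}_n$ compact, which bounds every $|\mathbf{a}^{H}(\varphi_{[i,k],n})\mathbf{f}_{[i,k],n}|$ and hence every $\gamma_{[i,k],n}^*$; substituting the optimal $\mathbf{Y}_n$ and $\mathbf{\gamma}_n$ collapses $f_q$ back to the bounded quantity $\sum_{i,k}\log(1+\text{SINR}_{[i,k],n})$. A monotone bounded sequence converges, which already gives convergence of the objective value.

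To upgrade this to convergence to a stationary point, I would invoke the standard block-coordinate-ascent result (in the form used for the fractional-programming setting of \cite{Shen}): the procedure cycles through three blocks, and the maximizers of the first two, $\mathbf{\gamma}_n$ via (\ref{gamma}) and $\mathbf{Y}_n$ via (\ref{x}), are \emph{unique} and depend continuously on the remaining variables. Hence, along any convergent subsequence of $\{\mathbf{F}_n^{(t)}\}$ (which exists by compactness) the auxiliary blocks also converge, and the limit point simultaneously satisfies the partial stationarity conditions in all three blocks, i.e. it is stationary for $f_q$ over the feasible set; by the equivalence between the quadratic transform and the original fractional program \cite{Shen}, such a point is also stationary for Problem (\ref{Optimization2}) with $\mathbf{\xi}_n$ fixed.

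The main obstacle I anticipate is precisely this last passage from ``the objective converges'' to ``the iterates have stationary limit points.'' It relies on (i) the continuous closed-form dependence of $\mathbf{\gamma}_n^*,\mathbf{Y}_n^*$ on $\mathbf{F}_n$ so that the stationarity conditions can be passed to the limit along a subsequence, and (ii) the $\mathbf{F}_n$-update being a genuine ascent-to-stationarity step for its subproblem — which is where the convexity claim for the $\mathbf{F}_n$-subproblem of (\ref{Opt3}) and the exactness of the PGD limit are actually needed. Some care is required here because the PCRB constraint in (\ref{Opt3}) is a lower bound on a convex quadratic in $\mathbf{f}_{[i,k],n}$, so one must justify that the subproblem is handled in the operating regime $\eta_{[i,k],n}\approx 1$ in which the authors linearize it; these are the points I would spell out in detail.
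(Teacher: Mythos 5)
Your proposal is correct and follows essentially the same route as the paper: the paper's entire proof is the one-line observation that the updates (\ref{gamma}) and (\ref{x}) are maximizing steps, so $f_q$ is non-decreasing across iterations. Everything you add beyond that --- the upper bound via compactness of $\{\mathbf{f}_{[i,k],n}:\|\mathbf{f}_{[i,k],n}\|^2\le 1\}$, the uniqueness and continuity of the $\mathbf{\gamma}_n$ and $\mathbf{Y}_n$ maximizers, the limit-point stationarity argument, and the caveat that the PCRB constraint is a reverse-convex constraint that makes the $\mathbf{F}_n$-feasible set nonconvex --- is material the paper's proof simply omits, so your version establishes the stated ``converges to a stationary point'' claim where the paper only establishes monotonicity.
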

\begin{proof}
The aim of (\ref{gamma}) and (\ref{x}) is to find the parameters that maximizes $f_q(\mathbf{\xi}_n,\mathbf{F}_n,\mathbf{\gamma}_n,\mathbf{Y}_n)$, hence $f_q(\mathbf{\xi}_n,\mathbf{F}_n,\mathbf{\gamma}_n,\mathbf{Y}_n)$ is non-decreasing after each iteration.
\end{proof}

\section{Learning-based Beamforming Design and Vehicle Assignment}

Solving the optimization problem proves to be time-consuming, because of the heuristic algorithm employed and the high computational complexity associated with each iteration. This work also introduces an RNN-based technique for beamforming design and vehicle assignment.
RNNs are a class of neural networks that are well-suited for modeling sequential data. As vehicle allocations and beamforming matrices are significantly influenced by the adjacent vehicles, they have to be optimized jointly. The impact of neighboring vehicles is managed by the internal states iterated within the RNN layer. Additionally, the number of vehicles to be allocated within the system varies over time, leading to time-varying input dimensions to the learning algorithm. RNNs can leverage their internal states to handle variable-length sequences of inputs, making them ideal for our requirements.

\subsection{LSTM Network}

LSTM networks, introduced by Hochreiter and Schmidhuber \cite{LSTM}, extend RNNs by providing short-term memory mechanisms to preserve internal states. A standard LSTM cell includes one input gate, one output gate, and one forget gate. The output of the forget gate can be calculated by
\begin{equation}
    f_t = \sigma (W_f\cdot [h_{t-1},x_t]+b_f),
\end{equation}
where $f_t$ lies in the range $(0,1)$, $W_f$ is the weight of the forget gate, $b_f$ is the bias of the forget gate, $x_t$ is the current input, and $h_{t-1}$ is the previous output value.

Similarly, the output of the input gate and the output gate can be respectively written as
\begin{equation}
    i_t = \sigma (W_i \cdot [h_{t-1},x_t]+b_i),
\end{equation}
\begin{equation}
    o_t = \sigma (W_o  \cdot [h_{t-1},x_t]+b_o),
\end{equation}
where $W_i$ and $W_o$ are the weights of the input gate and output gate respectively, $b_i$ and $b_o$ are the bias of the input gate and output gate respectively. While LSTM networks are commonly employed in time series contexts, we have found them effective in our model due to the robust correlation among adjacent users after sorting the vehicles based on their locations. The key advantage of LSTMs over traditional RNNs lies in their ability to learn and remember long-term dependencies in sequential data, due to their gated architecture and memory cells. This makes them well-suited for our work where the beamforming matrices of multiple vehicles need to be jointly considered.

\subsection{Bi-LSTM-based Vehicle Assignment and Beamforming Design}

Since a vehicle is affected by the vehicles on both sides, we employ a bi-directional LSTM framework, as depicted in Fig.~\ref{RNN2}, which encompasses $K$ sets of input features. Organizing information about the vehicles based on their positions before inputting into the learning algorithm is crucial. Considering the real and imaginary parts individually, the input feature of the neural network includes $\mathrm{Re}\{\mathbf{r}_{[i,k],n}\}$, $\mathrm{Im}\{\mathbf{r}_{[i,k],n}\}$, $\mathrm{Re}\{\mathbf{f}_{[i,k],n-1}\}$, $\mathrm{Im}\{\mathbf{f}_{[i,k],n-1}\}$, $\tilde{\nu}_{[i,k],n}$, and $\tilde{\mu}_{[i,k],n}$. Given the multi-dimensional nature of the input, a flatten layer follows each input layer to generate a one-dimensional tensor. 

\begin{figure}[!t]
    \centering
    \includegraphics[width=0.95\columnwidth]{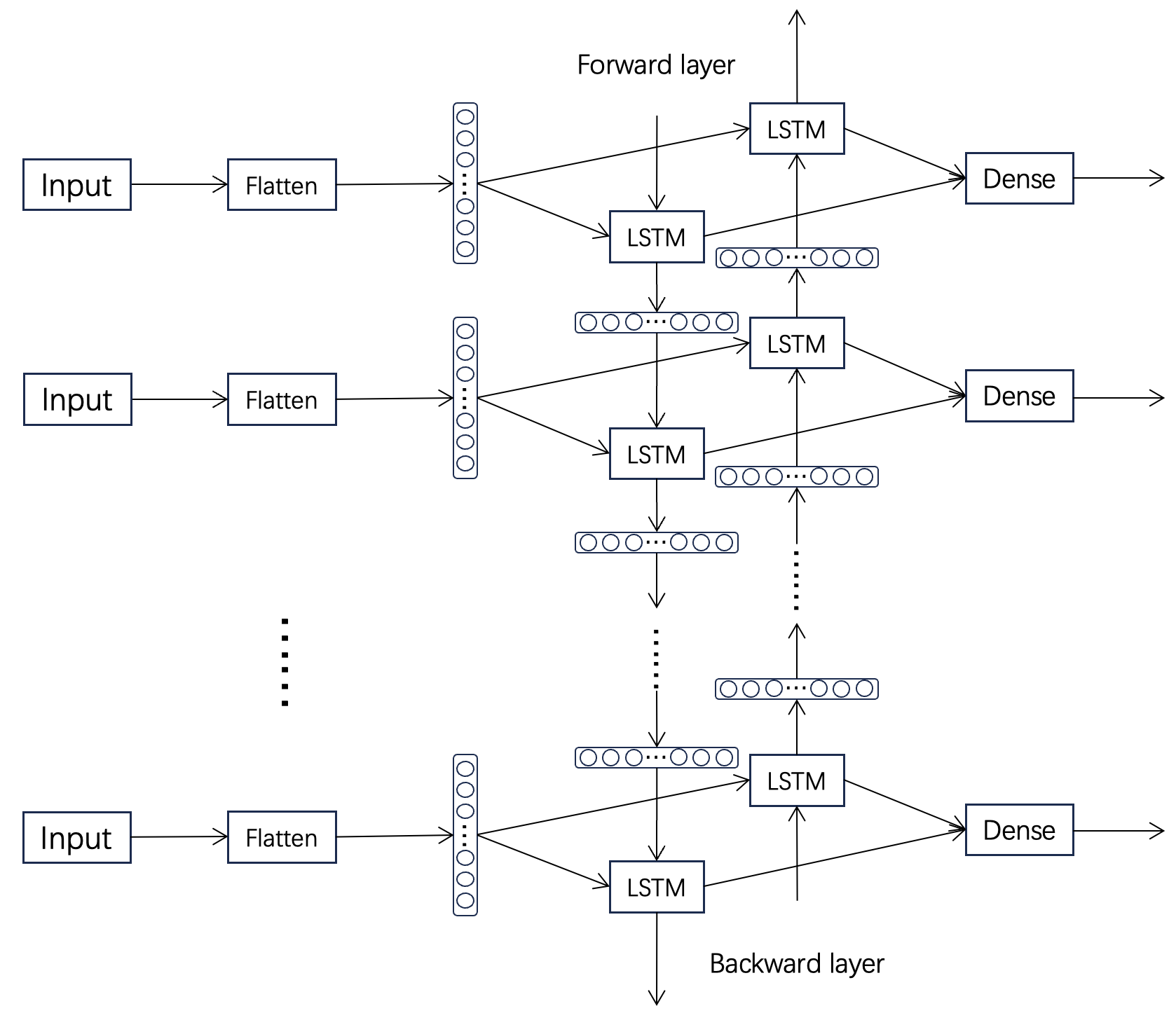}
    \caption{The bi-directional convolutional LSTM architecture for beamforming optimization.}
    \label{RNN2}
\end{figure}

The LSTM output is then forwarded to the adjacent LSTM to leverage the correlation between the beamforming of adjacent vehicles. Subsequently, fully connected dense layers, with Rectified Linear Unit (ReLU) serving as the activation function, follow the LSTM layer to generate the output. The output includes both the real and imaginary parts of the beamforming, i.e., $[\mathrm{Re}\{\mathbf{f}_{[i,k],n}\}, \mathrm{Im}\{\mathbf{f}_{[i,k],n}\}] \in \mathbb{C}^{2N_t\times 1}$, used to formulate the predictive beamforming. MSE is employed as the loss function for evaluating the predicted beamforming. Notably, the overfitting of this network is negligible, hence drop-out layers are not needed.

\subsection{Complexity Analysis}

The overall complexity of the proposed algorithm is the summation of the complexity of two separate learning algorithms, both are bi-directional LSTM networks. LSTM is local in both space and time, which means that the storage requirements of the network are irrelevant to the actual input size. For each time slot, the time complexity of each weight of the LSTM network equals $\mathcal{O}(1)$.

The weight of the LSTM network can be computed by $W=4[h(h+e)+h]$ with $h$ and $e$ being the number of hidden units and the embedding dimension of input. Since bi-directional LSTM is applied in both algorithms, the total time complexity of the proposed LSTM framework can be written as
\begin{equation}
    C_{1} = \mathcal{O}\left(8[h_1(h_1+e_1)+h_1]\right)= \mathcal{O}\left(8[h_1(h_1+e_1)]\right),
\end{equation}
where $h_l$ and $e_l$ are the number of hidden units and the embedding dimension of input. 

The complexity of each iteration in the greedy algorithm $C_{g}$ and the heuristic algorithm $C_h$ can be expressed by
\begin{equation}
    C_{g} = C_{h} = \mathcal{O}\left(K^2 N_t + \frac{IK^2N_t^2}{\epsilon}\right)=\mathcal{O}\left( \frac{IK^2N_t^2}{\epsilon}\right),
\end{equation}
where $\epsilon$ denotes the accuracy. Normally, the heuristic algorithm is faster when $K$ is low and the greedy algorithm is faster when $K$ is high. The complexity of the LSTM network has lower degrees than the heuristic algorithm, hence the LSTM is generally less complex especially when the number of vehicles is high.

\section{Numerical Results}

\begin{table*}[!t]
\centering
\caption{The settings in the simulations}
\begin{tabularx}{0.7\textwidth}{ 
   >{\centering\arraybackslash}X 
  >{\centering\arraybackslash}X  }
  \hline
Parameter                                               & Value                           \\ \hline
Complex fading coefficient                              & $\varrho = 10+10j$              \\
Channel power gain at $d_0 = 1$m & $\tilde{\alpha}=-70$dB          \\
Sensing channel noise variance& $\sigma_e^2=-70$dB          \\
Communication channel noise variance& $\sigma_c^2=-70$dB          \\
Index corresponds to $\sigma_{\mathbf{r},[i,k],n}^2$    & $\rho_{\mathbf{r}}=1$           \\
Index corresponds to $\sigma_{\nu,[i,k],n}^2$           & $\rho_\nu = 6.7 \times 10^{-7}$ \\
Index corresponds to $\sigma_{\mu,[i,k],n}^2$           & $\rho_\mu = 2 \times 10^4$      \\
Carrier frequency                                       & $f_c=30$ GHz                    \\
Slot duration                                           & $T=0.02$ s                      \\
Matched-filter gain                                     & $G = 10$                        \\
Location of RSU1                                        & $[0 \text{m},30 \text{m}]$                    \\
Location of RSU2                                        & $[0\text{m},-30\text{m}]$                     \\
Center of the highway                          & $[-30\text{m},0\text{m}] \to [30\text{m},0\text{m}]$                          \\
Width of the highway & 10m\\

\hline
\end{tabularx}
\label{Setting}
\end{table*}

We conducted various simulations in this section to evaluate the performance of our proposed methods. The default settings of the simulation parameters are shown in Table \ref{Setting}. We first set $N_r=N_t=32$ and train the LSTM networks with a training set of 10,000 samples and a validation set of 3,000 samples. The training results in terms of the accuracy rate for the vehicle assignment problem and the MSE for the beamforming optimization problem are shown in Fig. \ref{Accuracy} and Fig. \ref{MSE}, respectively. We can achieve an accuracy rate of above 96\% after tens of training epochs for vehicle assignment. The LSTM network for beamforming optimization requires more epochs for training. The optimal performance can be achieved after roughly 200 training epochs unless the number of vehicles in the system is too large. Note that as the number of vehicles increases, the accuracy rate decreases, while the MSE of the predicted beamforming also decreases. This reduction in MSE occurs because, with more vehicles in the system, distribution tends to be more even, reducing randomness. Conversely, when fewer vehicles are present, the randomness of the system increases significantly, which complicates beamforming prediction. Furthermore, with only a few vehicles, there typically is one optimal assignment, with other assignments being significantly inferior. However, as the number of vehicles increases, the system accommodates more sub-optimal solutions, which helps maintain relatively good performance despite the decrease in accuracy rate.

\begin{figure}[!t]
    \centering   \includegraphics[width=0.8\columnwidth]{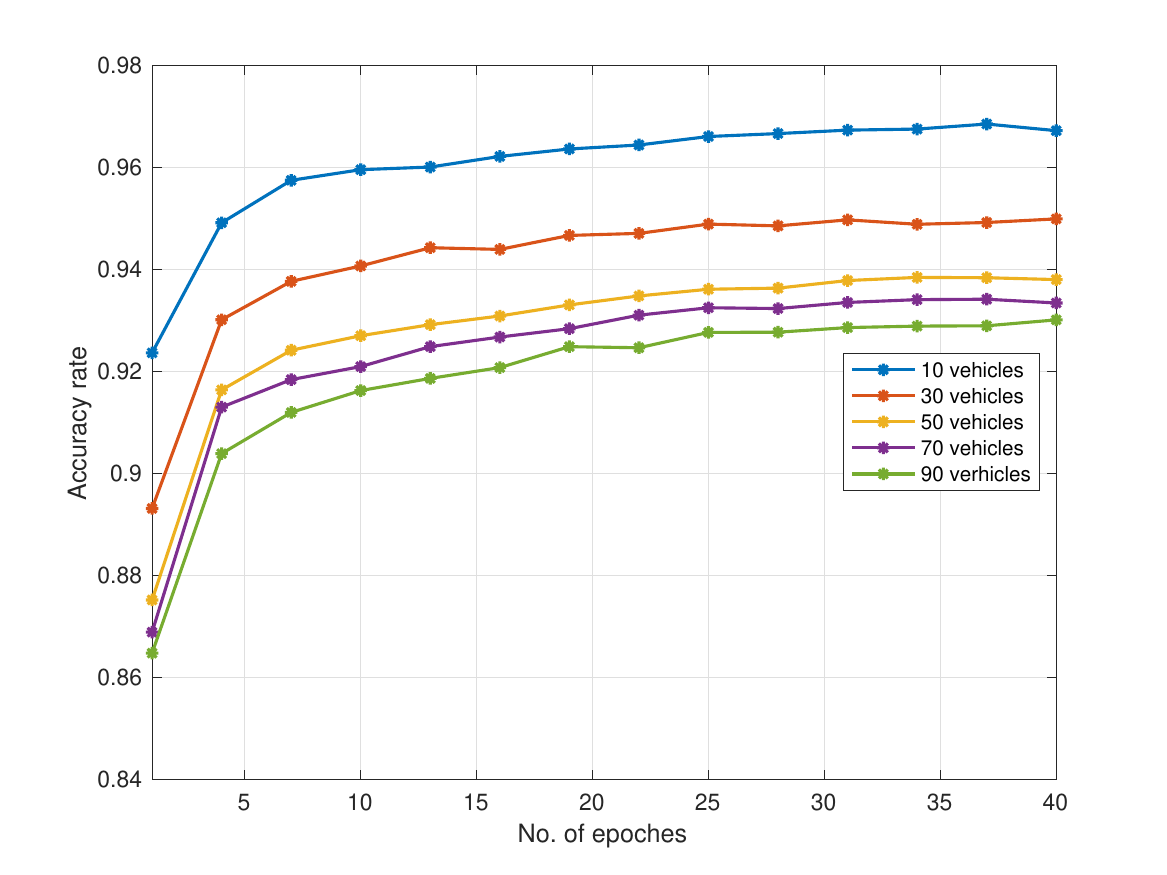}
    \caption{The evolution of the accuracy rate after different numbers of training epochs.}
    \label{Accuracy}
\end{figure}

\begin{figure}[!t]
    \centering   \includegraphics[width=0.8\columnwidth]{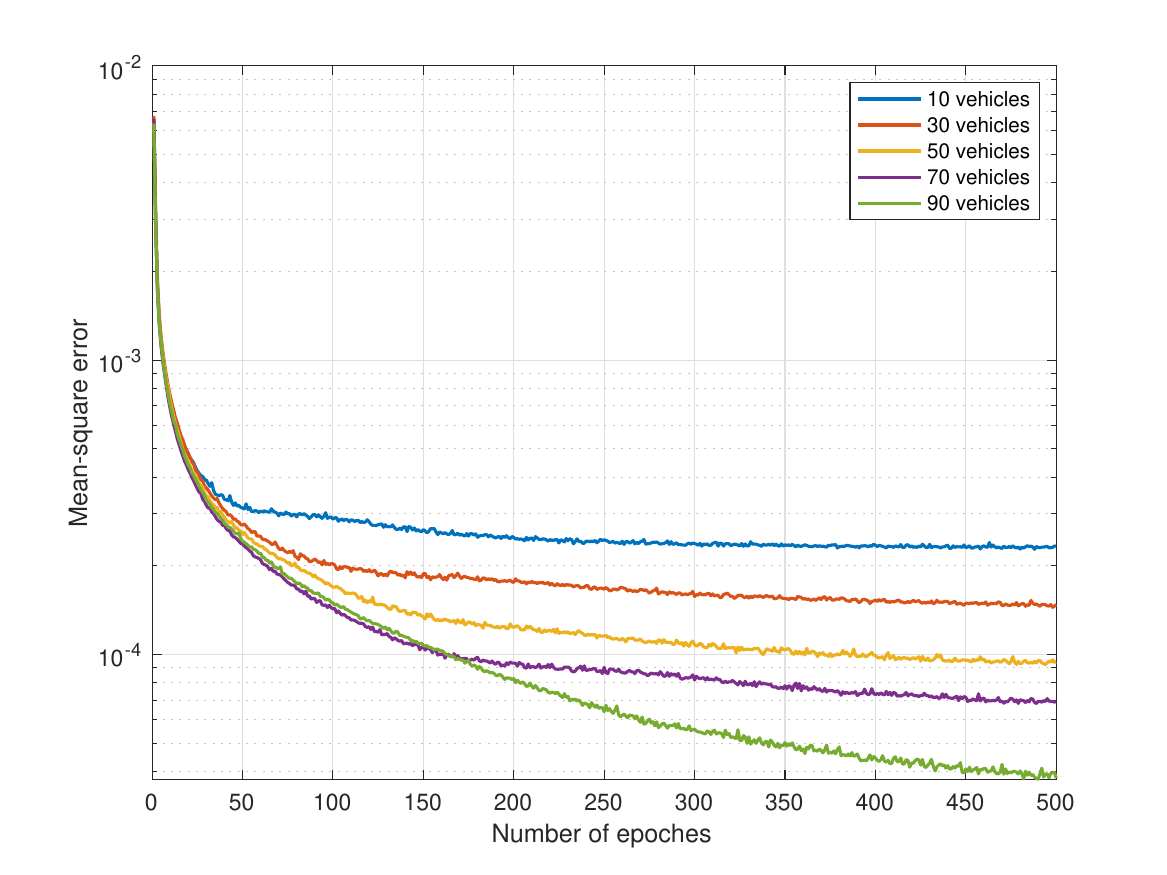}
    \caption{The evolution of the MSE of the output result after different numbers of training epochs.}
    \label{MSE}
\end{figure}

\begin{figure}[!t]
    \centering   \includegraphics[width=0.8\columnwidth]{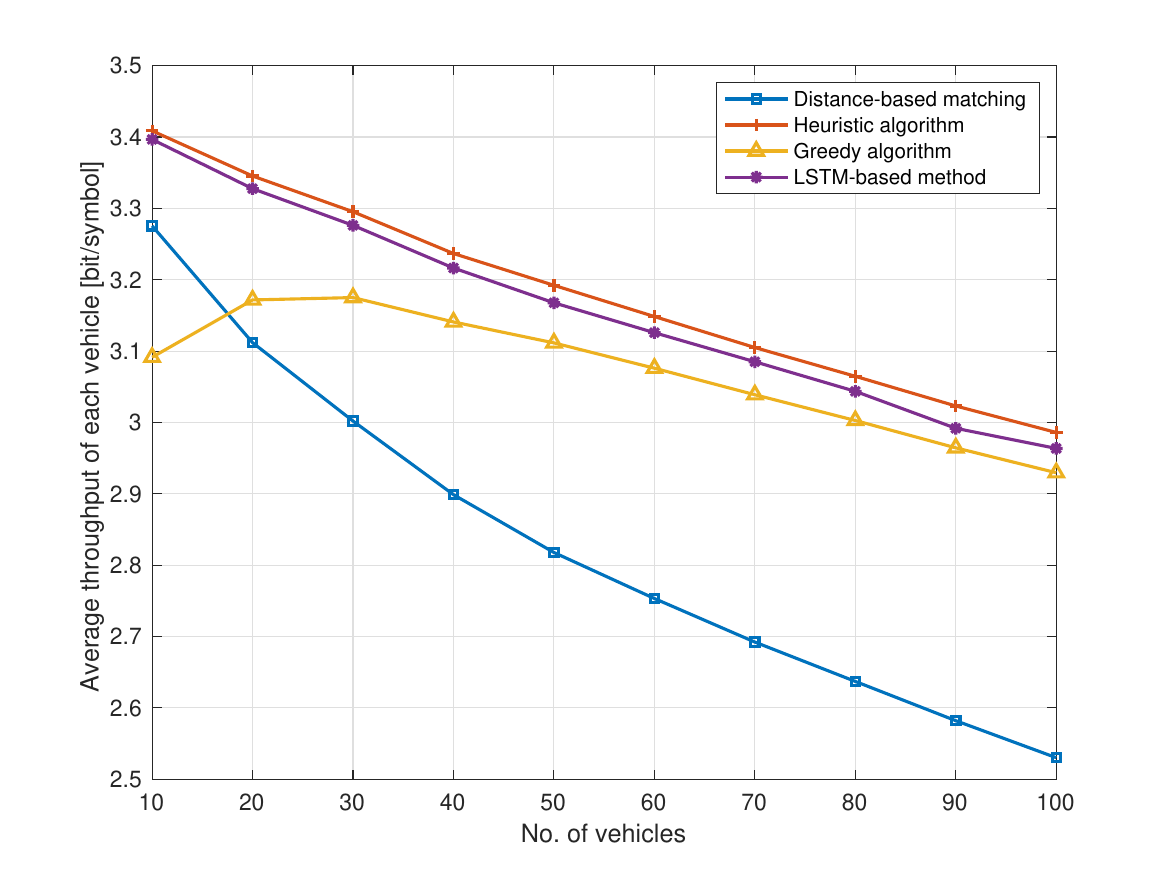}
    \caption{The average throughput of each vehicle with different numbers of vehicles in the system when assuming $N_t=N_r=32$.}
    \label{Th vs. ve}
\end{figure}

With the number of antennas being constant, we change the number of vehicles in the system. We compare the performance of the optimal matching and beamforming obtained from the heuristic algorithm, the LSTM-based method, the Greedy method, and the conventional distance-based vehicle-RSU matching. Fig. \ref{Th vs. ve} presents the relationship between the number of vehicles in the system and the average throughput of each vehicle. Generally, the average throughput of all four schemes decreases sharply as the number of vehicles increases. This decrease occurs because more vehicles introduce more interference and complicate the assignment. Our proposed DT-based vehicle assignment and beamforming design significantly outperforms the traditional distance-based method, especially when more vehicles are served simultaneously. However, the greedy algorithm performs even worse than the distance-based method when the number of vehicles is low but approaches the performance of Algorithm \ref{heuristic} when the number of vehicles is high. This behavior is due to the tendency of the greedy algorithm to produce sub-optimal matchings. When there are only a few vehicles in the system, the optimal matching between vehicles and RSUs is unique, and sub-optimal matching is far worse than the optimal one. However, as the number of vehicles increases, this performance gap becomes smaller, and there may even be multiple optimal matching schemes. Furthermore, we can see that with the LSTM network, vehicles can achieve throughput only slightly lower than it obtained from Algorithm \ref{heuristic}, while significantly reducing processing time.

\begin{figure}[!t]
    \centering   \includegraphics[width=0.8\columnwidth]{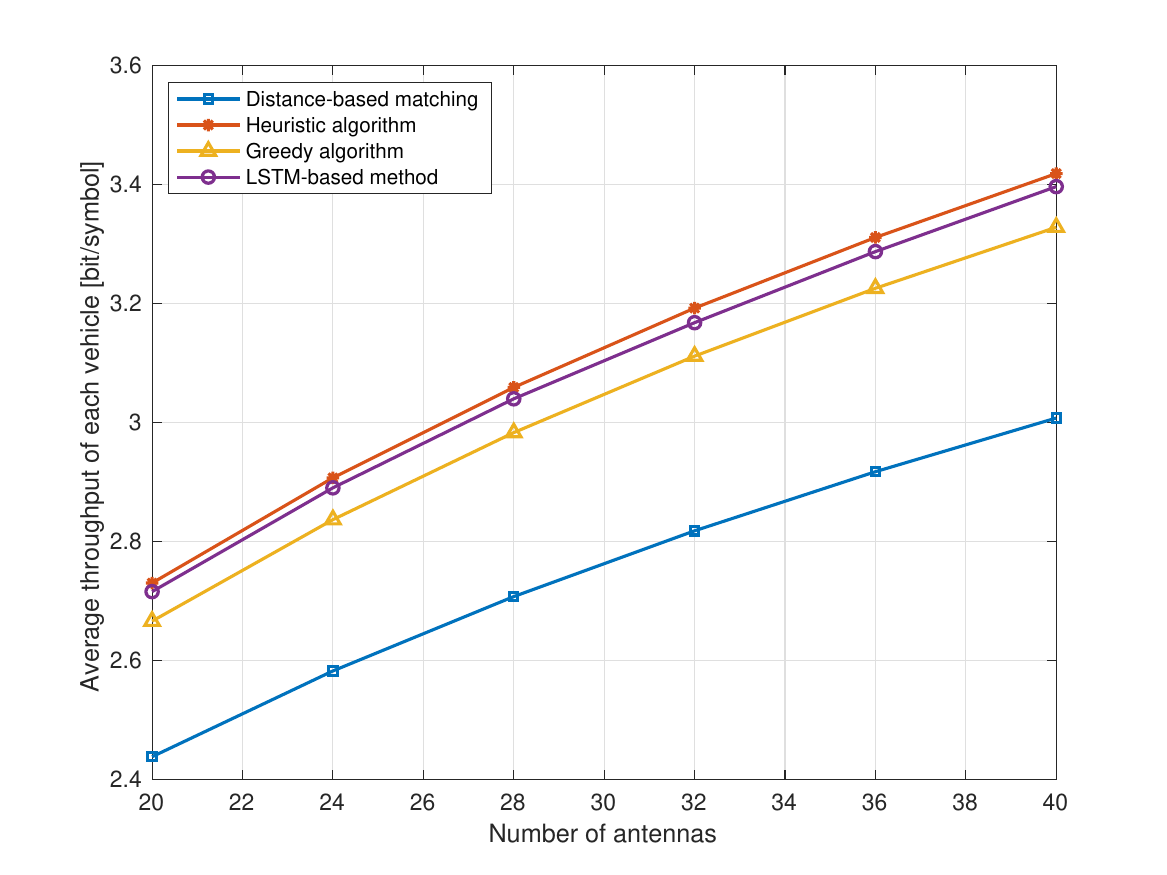}
    \caption{The average throughput of each vehicle with different numbers of transmit/receive antennas at the RSUs, when the number of vehicles is fixed and set as $K=50$.}
    \label{antenna}
\end{figure}

Apart from the number of vehicles, the performance of the system is also influenced by the number of antennas at the RSUs. We keep the number of vehicles in the system constant and vary the number of antennas, as shown in Fig. \ref{antenna}. It is assumed that the number of transmit antennas and the number of receive antennas are equal at the RSUs and there are $K=50$ vehicles. Clearly, the average throughput increases when there are more antennas. The results in the figure demonstrate that Algorithm \ref{heuristic} achieves the best performance, with the LSTM slightly behind, followed by the greedy algorithm and the distance-based method.

In all the previous simulations, we assume a constant number of vehicles. However, in reality, vehicles continuously enter and exit the system, leading to fluctuations in the number of vehicles. To see how the communication performance and sensing performance varies over time in a dynamic environment for a continuous period, we compare these schemes while assuming that vehicle arrivals follow a Poisson process. Given that the vehicles are on a highway, we assume that their instantaneous speeds follow a Gaussian distribution with a mean of 30 meters per second, and we ignore any congestion on the highway. To estimate the average number of vehicles when the system is in a stationary state, we apply Little's law. Since the arrival of vehicles is random, the results of each simulation vary. Therefore, we simulate the system for lots of times and used one of the most common cases as an example.

\begin{figure}[!t]
    \centering   \includegraphics[width=0.8\columnwidth]{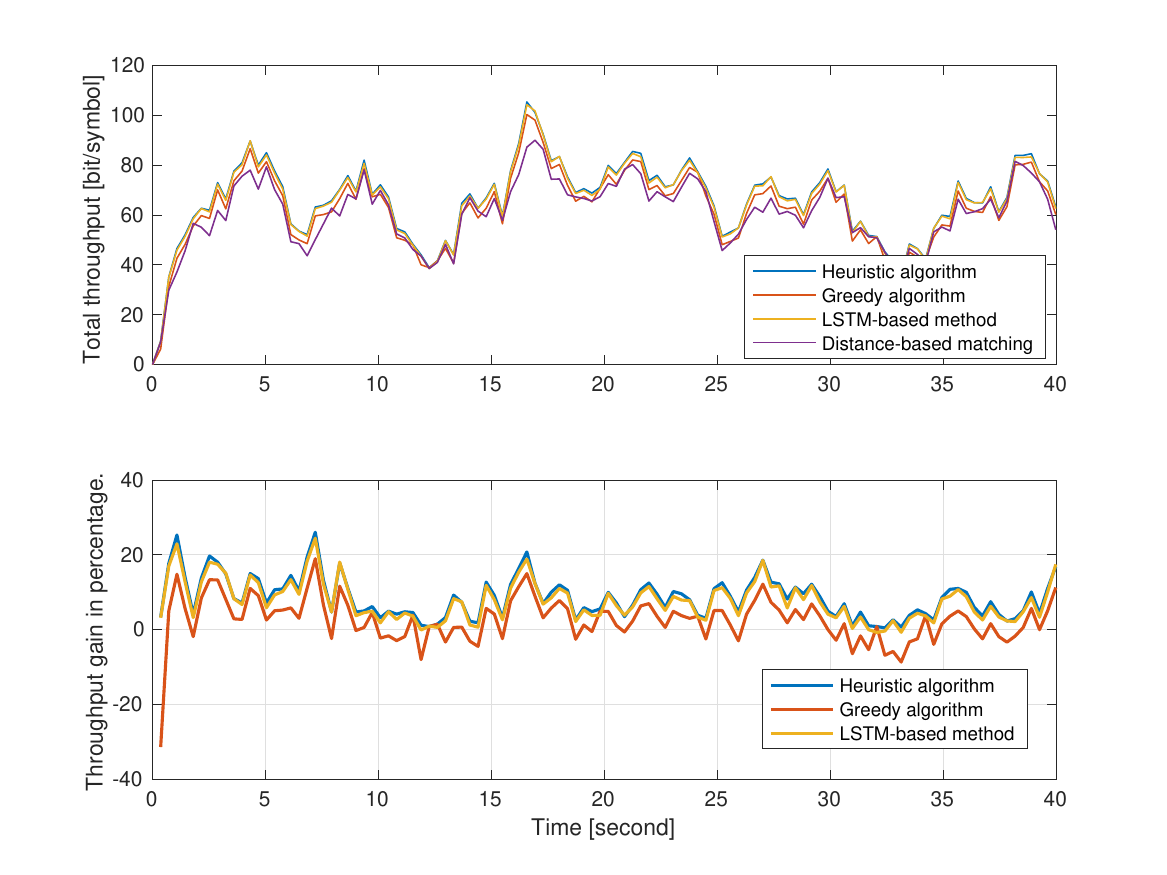}
    \caption{An example of the total instantaneous throughout with an average arrival rate of 5 vehicles per second in each direction.}
    \label{Poisson 20}
\end{figure}

Initially, we set the average arrival rates in both directions to 5 vehicles per second, resulting in an average of 20 vehicles in the system. Starting from an initial state where the system is vacant, we continuously monitor performance for 40 seconds and present the simulation results in Fig. \ref{Poisson 20}. The upper half of Fig. \ref{Poisson 20} displays the total throughput comparison of all four schemes. It shows that the total throughput increases initially and then fluctuates around a stationary level. It is challenging to point out the superior scheme on the upper half of Fig. \ref{Poisson 20}. To provide more clarity, we present the throughput gain of the other three schemes relative to the distance-based matching on the lower half of Fig. \ref{Poisson 20}, which corresponds to the upper half exactly. In the first three seconds, the throughput gain of the greedy algorithm remains negative, which also demonstrates the results in Fig. \ref{Th vs. ve}. As the number of vehicles increases, the greedy algorithm generally outperforms the distance-based matching, but it exhibits occasional negative throughput gains, indicating its high instability. The matching based on Algorithm \ref{heuristic} consistently demonstrates superiority, except for scenarios with very few vehicles, where it achieves zero gain. The LSTM-based method, while slightly inferior to Algorithm \ref{heuristic}, still outperforms the other schemes in most cases.

\begin{figure}[!t]
    \centering   \includegraphics[width=0.8\columnwidth]{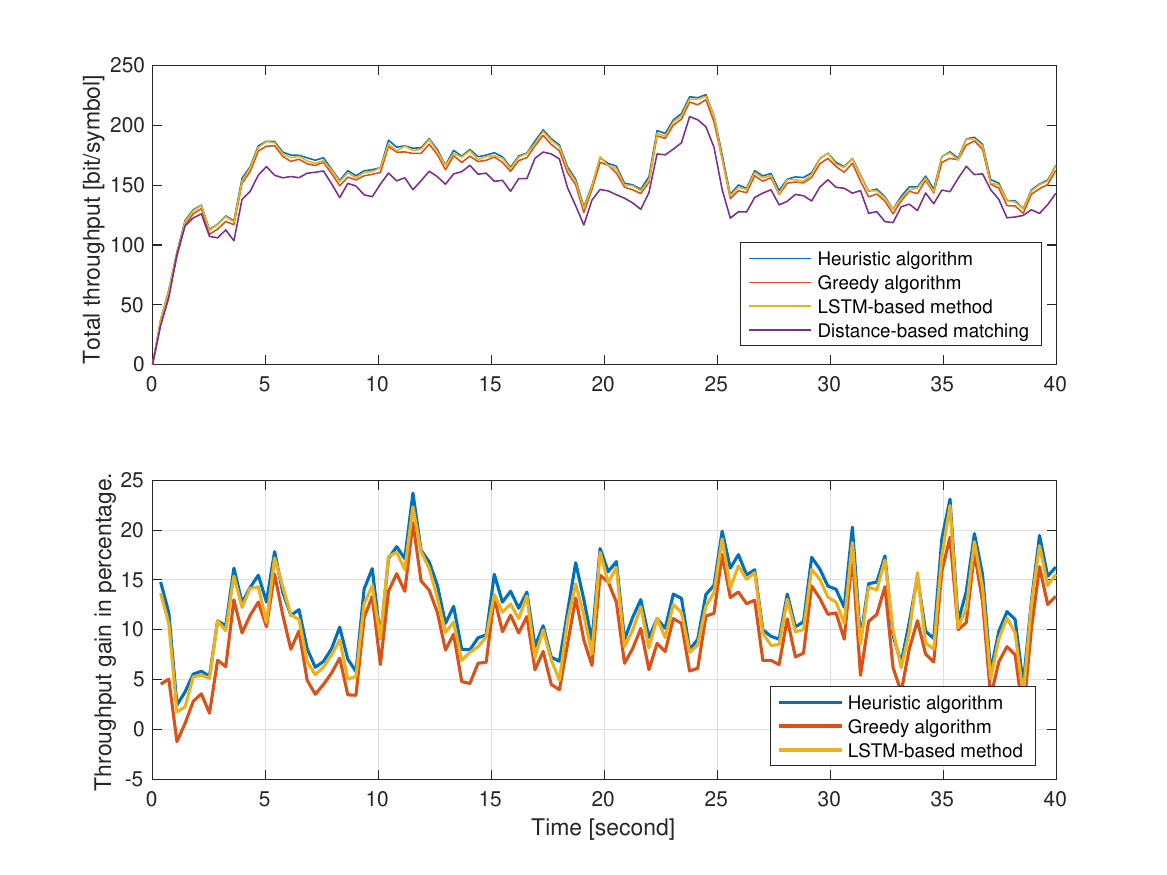}
    \caption{An example of the total instantaneous throughout with an average arrival rate of 12.5 vehicles per second in each direction.}
    \label{Poisson 50}
\end{figure}

Fig. \ref{Poisson 50} displays the results when we increase the arrival rate to 12.5 vehicles per second, resulting in an average of 50 vehicles when the system is stationary. Similar to the previous scenario, the total throughput initially increases and then fluctuates around a stationary value, but the initial increase is sharper. As the number of vehicles rapidly rises to a relatively high level, the throughput gain of the greedy algorithm is negative for only one to two seconds and remains positive for the remainder of the time. The overall performance ranking remains consistent: Algorithm \ref{heuristic} is the best, followed by the LSTM-based method and the greedy algorithm, all of which outperform the distance-based matching. Additionally, we observe that the stability of the greedy algorithm significantly improves as the average number of vehicles in the system increases. Furthermore, in both Fig. \ref{Poisson 20} and Fig. \ref{Poisson 50}, the total throughput increases initially and remains high as the system stabilizes, indicating that the sensing performance also remains high. This consistent level of throughput suggests that the sensing accuracy of the system is effectively supporting the communication processes, corroborating our approach that links high communication rates with robust sensing performance.

\begin{figure}[!t]
    \centering   \includegraphics[width=0.8\columnwidth]{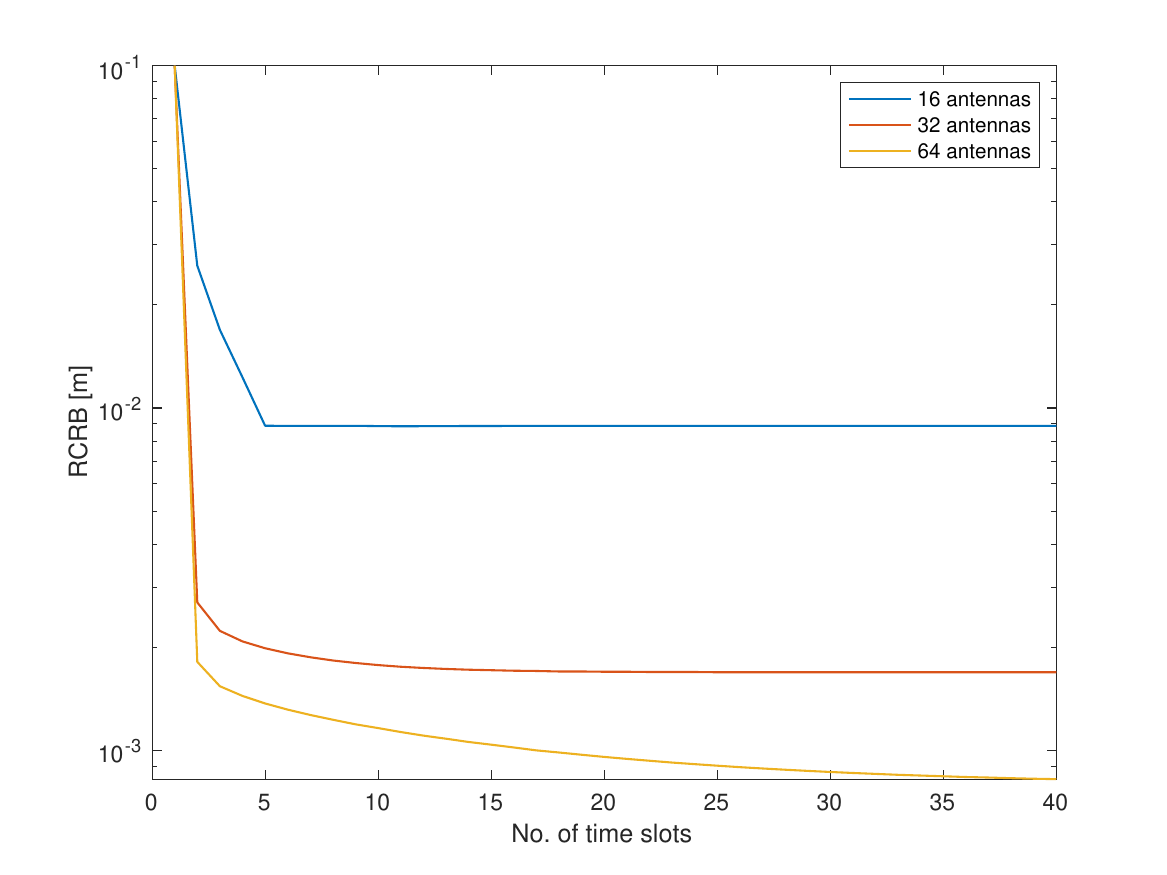}
    \caption{The evolution of the RCRB with different numbers of antennas.}
    \label{RCRB}
\end{figure}

In Fig. \ref{RCRB}, we compare the root of CRB (RCRB) for scenarios with 16, 32, and 64 transmit antennas. Initially setting RCRB to 0.1 meters, our results demonstrate a sharp decrease in RCRB at the beginning due to the design of beamforming techniques that enhance both communication and sensing performances. Subsequently, RCRB reaches a steady state where it converges to a constant value suitable for high-quality communication purposes. Notably, as the number of antennas increases, the RCRB bound decreases.

According to the simulation results, our proposed vehicle assignment and beamforming optimization algorithm outperforms all the other schemes no matter whether there are more or fewer vehicles in the system. The LSTM-based algorithm brings us slightly lower throughput but significantly reduces the processing time. The greedy algorithm performs the worst when the number of vehicles in the system is low, but approaches the optimal method when the number of vehicles increases.   

\section{Conclusion}

In this work, we introduce a design for DT-based ISAC in vehicular networks. The physical system in the real world comprises two RSUs and multiple vehicles. The DT is used to create the motion model of the vehicles to track them and perform intricate computations aimed at optimizing the overall communication rate. The received sensing information from the vehicles is then forwarded to the DT for beamforming design, vehicle assignments, and potentially other applications. While the optimization problem can be solved iteratively, we also leverage bi-directional LSTM networks to expedite decision-making. Our results demonstrate a substantial improvement in overall throughput. Furthermore, this work has the potential for extension to scenarios involving multiple RSUs and applications in facilitating handovers.


\appendix
\section{Find R}\label{appendix:R}

First, we define $\mathbf{\Omega}_{[i,k],n}$ as
\begin{equation}
\begin{aligned}
    \mathbf{\Omega}_{[i,k],n} = &\mathbf{H}_{[i,k],n}\mathbf{Q}_{[i,k],n}^{-1}\mathbf{H}_{[i,k],n}^H\\
    &+ (\mathbf{G}_{[i,k],n} \tilde{\mathbf{M}}_{[i,k],n-1}\mathbf{G}_{[i,k],n}^H)^{-1}.
\end{aligned}
\end{equation}
 
Then, $\tilde{m}_{[i,k],n}^{(11)} \leq \tilde{m}_{[i,k],n-1}^{(11)}$ is equivalent to
\begin{equation}
\begin{aligned}
    &(\omega_{(2,2)}\omega_{(3,3)}-\omega_{(2,3)}\omega_{(3,2)})/\Big(\omega_{(1,1)}(\omega_{(2,2)}\omega_{(3,3)}\\
    &-\omega_{(2,3)}\omega_{(3,2)})-\omega_{(1,2)}(\omega_{(2,1)}\omega_{(3,3)}-\omega_{(2,3)}\omega_{(3,1)})\\
    &+\omega_{(1,3)}(\omega_{(2,1)}\omega_{(3,2)}-\omega_{(2,2)}\omega_{(3,1)})\Big)\leq \tilde{m}_{[i,k],n-1}^{(11)},
\end{aligned}
\end{equation}
where $\omega^{(i,j)}$ is the $(i,j)$-th entry of $\mathbf{\Omega}_{[i,k],n}$. Because $\mathbf{\Omega}_{[i,k],n}$ is positive-definite, we have
\begin{equation}
\begin{aligned}
    &\Big(\omega_{(1,2)}(\omega_{(2,1)}\omega_{(3,3)}-\omega_{(2,3)}\omega_{(3,1)}) + \omega_{(1,3)}(\omega_{(2,1)}\omega_{(3,2)}\\
    &-\omega_{(2,2)}\omega_{(3,1)}) + (\omega_{(2,2)}\omega_{(3,3)}-\omega_{(2,3)}\omega_{(3,2)})/\tilde{m}_{[i,k],n-1}^{(11)}\Big)\\
    &/(\omega_{(2,2)}\omega_{(3,3)}-\omega_{(2,3)}\omega_{(3,2)})\leq \omega_{(1,1)}.
\end{aligned}
\label{56}
\end{equation}

We define $\hat{m}_{[i,k],n}^{(1,1)}$ as the $(1,1)$-th entry of $\hat{\mathbf{M}}_{[i,k],n}$, then $\omega_{(1,1)}$ can be written as
\begin{equation}
\begin{aligned}
    &\omega_{(1,1)} = \\
    &\frac{\kappa^2 \beta_{[i,k],n}^2 G^2}{\sigma_{\mathbf{r},[i,k],n}^2}\left(\mathbf{\Pi}_{[i,k],n}\mathbf{f}_{[i,k],n}\right)^H \left(\mathbf{\Pi}_{[i,k],n}\mathbf{f}_{[i,k],n}\right) + \hat{m}_{[i,k],n}^{(1,1)}.
\end{aligned}
\label{57}
\end{equation}

Based on (\ref{56}) and (\ref{57}), $\Lambda_{[i,k],n}$ in (\ref{Constraint}) can be calculated by
\begin{equation}
\begin{aligned}
    \Lambda_{[i,k],n} &= \Bigg(\Big(\omega_{(1,2)}(\omega_{(2,1)}\omega_{(3,3)}-\omega_{(2,3)}\omega_{(3,1)}) \\
    &+ \omega_{(1,3)}(\omega_{(2,1)}\omega_{(3,2)}-\omega_{(2,2)}\omega_{(3,1)}) + (\omega_{(2,2)}\omega_{(3,3)}\\
    &-\omega_{(2,3)}\omega_{(3,2)})/\tilde{m}_{[i,k],n-1}^{(11)}\Big)/(\omega_{(2,2)}\omega_{(3,3)}\\
    &-\omega_{(2,3)}\omega_{(3,2)})-\hat{m}_{[i,k],n}^{(1,1)} \Bigg)\frac{\sigma_{\mathbf{r},[i,k],n}^2}{\kappa^2 \beta_{[i,k],n}^2 G^2}.
\end{aligned}
\end{equation}

\bibliographystyle{IEEEtran}
\bibliography{MMM}

\begin{thebibliography}{10}
\providecommand{\url}[1]{#1}
\csname url@samestyle\endcsname
\providecommand{\newblock}{\relax}
\providecommand{\bibinfo}[2]{#2}
\providecommand{\BIBentrySTDinterwordspacing}{\spaceskip=0pt\relax}
\providecommand{\BIBentryALTinterwordstretchfactor}{4}
\providecommand{\BIBentryALTinterwordspacing}{\spaceskip=\fontdimen2\font plus
\BIBentryALTinterwordstretchfactor\fontdimen3\font minus \fontdimen4\font\relax}
\providecommand{\BIBforeignlanguage}[2]{{%
\expandafter\ifx\csname l@#1\endcsname\relax
\typeout{** WARNING: IEEEtran.bst: No hyphenation pattern has been}%
\typeout{** loaded for the language `#1'. Using the pattern for}%
\typeout{** the default language instead.}%
\else
\language=\csname l@#1\endcsname
\fi
#2}}
\providecommand{\BIBdecl}{\relax}
\BIBdecl

\bibitem{JiangLi}
L.~Jiang, H.~Zheng, H.~Tian, S.~Xie, and Y.~Zhang, ``Cooperative federated learning and model update verification in blockchain-empowered digital twin edge networks,'' \emph{IEEE Internet of Things Journal}, vol.~9, no.~13, pp. 11\,154--11\,167, 2022.

\bibitem{DaiYueyue}
Y.~Dai and Y.~Zhang, ``Adaptive digital twin for vehicular edge computing and networks,'' \emph{Journal of Communications and Information Networks}, vol.~7, no.~1, pp. 48--59, 2022.

\bibitem{Do-Duy}
T.~Do-Duy, D.~Van~Huynh, O.~A. Dobre, B.~Canberk, and T.~Q. Duong, ``Digital twin-aided intelligent offloading with edge selection in mobile edge computing,'' \emph{IEEE Wireless Communications Letters}, vol.~11, no.~4, pp. 806--810, 2022.

\bibitem{zhangPeiying}
P.~Zhang, Y.~Wang, G.~S. Aujla, A.~Jindal, and Y.~D. Al-Otaibi, ``A blockchain-based authentication scheme and secure architecture for iot-enabled maritime transportation systems,'' \emph{IEEE Transactions on Intelligent Transportation Systems}, vol.~24, no.~2, pp. 2322--2331, 2022.

\bibitem{WangHuan}
H.~Wang, X.~Di, Y.~Wang, B.~Ren, G.~Gao, and J.~Deng, ``An intelligent digital twin method based on spatio-temporal feature fusion for iot attack behavior identification,'' \emph{IEEE Journal on Selected Areas in Communications}, vol.~41, no.~11, pp. 3561--3572, 2023.

\bibitem{LuYunlong}
Y.~Lu, X.~Huang, K.~Zhang, S.~Maharjan, and Y.~Zhang, ``Communication-efficient federated learning for digital twin edge networks in industrial iot,'' \emph{IEEE Transactions on Industrial Informatics}, vol.~17, no.~8, pp. 5709--5718, 2021.

\bibitem{Chukhno}
O.~Chukhno, N.~Chukhno, G.~Araniti, C.~Campolo, A.~Iera, and A.~Molinaro, ``Placement of social digital twins at the edge for beyond 5g iot networks,'' \emph{IEEE Internet of Things Journal}, vol.~9, no.~23, pp. 23\,927--23\,940, 2022.

\bibitem{YunSeongjin}
S.~Yun, J.-H. Park, and W.-T. Kim, ``Data-centric middleware based digital twin platform for dependable cyber-physical systems,'' in \emph{2017 Ninth International Conference on Ubiquitous and Future Networks (ICUFN)}, 2017, pp. 922--926.

\bibitem{LinWD}
W.~D. Lin and M.~Y.~H. Low, ``Concept design of a system architecture for a manufacturing cyber-physical digital twin system,'' in \emph{2020 IEEE International Conference on Industrial Engineering and Engineering Management (IEEM)}, 2020, pp. 1320--1324.

\bibitem{XieGuoqi}
G.~Xie, K.~Yang, C.~Xu, R.~Li, and S.~Hu, ``Digital twinning based adaptive development environment for automotive cyber-physical systems,'' \emph{IEEE Transactions on Industrial Informatics}, vol.~18, no.~2, pp. 1387--1396, 2022.

\bibitem{Zhangke}
K.~Zhang, J.~Cao, and Y.~Zhang, ``Adaptive digital twin and multiagent deep reinforcement learning for vehicular edge computing and networks,'' \emph{IEEE Transactions on Industrial Informatics}, vol.~18, no.~2, pp. 1405--1413, 2022.

\bibitem{FuYanfang}
Y.~Fu, D.~Guo, Q.~Li, L.~Liu, S.~Qu, and W.~Xiang, ``Digital twin based network latency prediction in vehicular networks,'' \emph{Electronics}, vol.~11, no.~14, p. 2217, 2022.

\bibitem{DongJianghong}
J.~Dong, Q.~Xu, J.~Wang, C.~Yang, M.~Cai, C.~Chen, Y.~Liu, J.~Wang, and K.~Li, ``Mixed cloud control testbed: Validating vehicle-road-cloud integration via mixed digital twin,'' \emph{IEEE Transactions on Intelligent Vehicles}, vol.~8, no.~4, pp. 2723--2736, 2023.

\bibitem{Khan}
L.~U. Khan, E.~Mustafa, J.~Shuja, F.~Rehman, K.~Bilal, Z.~Han, and C.~S. Hong, ``Federated learning for digital twin-based vehicular networks: Architecture and challenges,'' \emph{IEEE Wireless Communications}, pp. 1--8, 2023.

\bibitem{Shadrin}
S.~S. Shadrin, D.~A. Makarova, A.~M. Ivanov, and N.~A. Maklakov, ``Safety assessment of highly automated vehicles using digital twin technology,'' in \emph{2021 Intelligent Technologies and Electronic Devices in Vehicle and Road Transport Complex (TIRVED)}, 2021, pp. 1--5.

\bibitem{Dimitrova}
E.~Dimitrova and S.~Tomov, ``Digital twins: An advanced technology for railways maintenance transformation,'' in \emph{2021 13th Electrical Engineering Faculty Conference (BulEF)}, 2021, pp. 1--5.

\bibitem{WuXingtang}
X.~Wu, W.~Lian, M.~Zhou, H.~Song, and H.~Dong, ``A digital twin-based fault diagnosis framework for bogies of high-speed trains,'' \emph{IEEE Journal of Radio Frequency Identification}, vol.~7, pp. 203--207, 2023.

\bibitem{YuGang}
G.~Yu, X.~Ye, X.~Xia, and Y.~Chen, ``Towards cognitive ev charging stations enabled by digital twin and parallel intelligence,'' in \emph{2021 IEEE 1st International Conference on Digital Twins and Parallel Intelligence (DTPI)}, 2021, pp. 290--293.

\bibitem{yang2023joint}
Z.~Yang, M.~Chen, Y.~Liu, and Z.~Zhang, ``A joint communication and computation framework for digital twin over wireless networks,'' \emph{IEEE journal of selected topics in signal processing}, 2023.

\bibitem{yang2024optimizing}
------, ``Optimizing synchronization delay for digital twin over wireless networks,'' in \emph{ICASSP 2024-2024 IEEE International Conference on Acoustics, Speech and Signal Processing (ICASSP)}.\hskip 1em plus 0.5em minus 0.4em\relax IEEE, 2024, pp. 9106--9110.

\bibitem{Chiriyath}
A.~R. Chiriyath, B.~Paul, and D.~W. Bliss, ``Radar-communications convergence: Coexistence, cooperation, and co-design,'' \emph{IEEE Transactions on Cognitive Communications and Networking}, vol.~3, no.~1, pp. 1--12, 2017.

\bibitem{Ghatak}
G.~Ghatak, R.~Koirala, A.~De~Domenico, B.~Denis, D.~Dardari, and B.~Uguen, ``Positioning data-rate trade-off in mm-wave small cells and service differentiation for 5g networks,'' in \emph{2018 IEEE 87th Vehicular Technology Conference (VTC Spring)}, 2018, pp. 1--5.

\bibitem{Destino}
G.~Destino and H.~Wymeersch, ``On the trade-off between positioning and data rate for mm-wave communication,'' in \emph{2017 IEEE International Conference on Communications Workshops (ICC Workshops)}, 2017, pp. 797--802.

\bibitem{Kumar}
D.~Kumar, J.~Saloranta, G.~Destino, and A.~Tölli, ``On trade-off between 5g positioning and mmwave communication in a multi-user scenario,'' in \emph{2018 8th International Conference on Localization and GNSS (ICL-GNSS)}, 2018, pp. 1--5.

\bibitem{LiuFan}
F.~Liu, W.~Yuan, C.~Masouros, and J.~Yuan, ``Radar-assisted predictive beamforming for vehicular links: Communication served by sensing,'' \emph{IEEE Transactions on Wireless Communications}, vol.~19, no.~11, pp. 7704--7719, 2020.

\bibitem{Xu}
Z.~Xu and A.~Petropulu, ``A dual-function radar communication system with ofdm waveforms and subcarrier sharing,'' 2022.

\bibitem{ChenXu}
X.~Chen, Z.~Feng, J.~Andrew~Zhang, Z.~Wei, X.~Yuan, and P.~Zhang, ``Sensing-aided uplink channel estimation for joint communication and sensing,'' \emph{IEEE Wireless Communications Letters}, vol.~12, no.~3, pp. 441--445, 2023.

\bibitem{LinBangjiang}
B.~Lin, Z.~Ghassemlooy, J.~Xu, Q.~Lai, X.~Shen, and X.~Tang, ``Experimental demonstration of compressive sensing-based channel estimation for mimo-ofdm vlc,'' \emph{IEEE Wireless Communications Letters}, vol.~9, no.~7, pp. 1027--1030, 2020.

\bibitem{Hossein}
H.~Bobarshad, M.~van~der Schaar, and M.~R. Shikh-Bahaei, ``A low-complexity analytical modeling for cross-layer adaptive error protection in video over wlan,'' \emph{IEEE Transactions on Multimedia}, vol.~12, no.~5, pp. 427--438, 2010.

\bibitem{Heath}
R.~W. Heath, N.~González-Prelcic, S.~Rangan, W.~Roh, and A.~M. Sayeed, ``An overview of signal processing techniques for millimeter wave mimo systems,'' \emph{IEEE Journal of Selected Topics in Signal Processing}, vol.~10, no.~3, pp. 436--453, 2016.

\bibitem{Marzetta}
T.~L. Marzetta, ``Noncooperative cellular wireless with unlimited numbers of base station antennas,'' \emph{IEEE Transactions on Wireless Communications}, vol.~9, no.~11, pp. 3590--3600, 2010.

\bibitem{F.Liu}
F.~Liu, C.~Masouros, A.~P. Petropulu, H.~Griffiths, and L.~Hanzo, ``Joint radar and communication design: Applications, state-of-the-art, and the road ahead,'' \emph{IEEE Transactions on Communications}, vol.~68, no.~6, pp. 3834--3862, 2020.

\bibitem{GaoFeifei}
F.~Gao, L.~Xu, and S.~Ma, ``Integrated sensing and communications with joint beam-squint and beam-split for mmwave/thz massive mimo,'' \emph{IEEE Transactions on Communications}, vol.~71, no.~5, pp. 2963--2976, 2023.

\bibitem{FangXinran}
X.~Fang, W.~Feng, Y.~Chen, N.~Ge, and Y.~Zhang, ``Joint communication and sensing toward 6g: Models and potential of using mimo,'' \emph{IEEE Internet of Things Journal}, vol.~10, no.~5, pp. 4093--4116, 2023.

\bibitem{QiQiao}
Q.~Qi, X.~Chen, A.~Khalili, C.~Zhong, Z.~Zhang, and D.~W.~K. Ng, ``Integrating sensing, computing, and communication in 6g wireless networks: Design and optimization,'' \emph{IEEE Transactions on Communications}, vol.~70, no.~9, pp. 6212--6227, 2022.

\bibitem{WeiZhiqing}
Z.~Wei, Y.~Wang, L.~Ma, S.~Yang, Z.~Feng, C.~Pan, Q.~Zhang, Y.~Wang, H.~Wu, and P.~Zhang, ``5g prs-based sensing: A sensing reference signal approach for joint sensing and communication system,'' \emph{IEEE Transactions on Vehicular Technology}, vol.~72, no.~3, pp. 3250--3263, 2023.

\bibitem{Yongjun}
Y.~Liu, G.~Liao, J.~Xu, Z.~Yang, and Y.~Zhang, ``Adaptive ofdm integrated radar and communications waveform design based on information theory,'' \emph{IEEE Communications Letters}, vol.~21, no.~10, pp. 2174--2177, 2017.

\bibitem{Huang}
T.~Huang, N.~Shlezinger, X.~Xu, Y.~Liu, and Y.~C. Eldar, ``Majorcom: A dual-function radar communication system using index modulation,'' \emph{IEEE Transactions on Signal Processing}, vol.~68, pp. 3423--3438, 2020.

\bibitem{MaoTianqi}
T.~Mao, J.~Chen, Q.~Wang, C.~Han, Z.~Wang, and G.~K. Karagiannidis, ``Waveform design for joint sensing and communications in millimeter-wave and low terahertz bands,'' \emph{IEEE Transactions on Communications}, vol.~70, no.~10, pp. 7023--7039, 2022.

\bibitem{WeiZhiqing2}
Z.~Wei, H.~Qu, W.~Jiang, K.~Han, H.~Wu, and Z.~Feng, ``Iterative signal processing for integrated sensing and communication systems,'' \emph{IEEE Transactions on Green Communications and Networking}, vol.~7, no.~1, pp. 401--412, 2023.

\bibitem{LiuChang}
C.~Liu, W.~Yuan, S.~Li, X.~Liu, H.~Li, D.~W.~K. Ng, and Y.~Li, ``Learning-based predictive beamforming for integrated sensing and communication in vehicular networks,'' \emph{IEEE Journal on Selected Areas in Communications}, vol.~40, no.~8, pp. 2317--2334, 2022.

\bibitem{YuanWeijie}
W.~Yuan, F.~Liu, C.~Masouros, J.~Yuan, D.~W.~K. Ng, and N.~González-Prelcic, ``Bayesian predictive beamforming for vehicular networks: A low-overhead joint radar-communication approach,'' \emph{IEEE Transactions on Wireless Communications}, vol.~20, no.~3, pp. 1442--1456, 2021.

\bibitem{LiuXiang}
X.~Liu, T.~Huang, N.~Shlezinger, Y.~Liu, J.~Zhou, and Y.~C. Eldar, ``Joint transmit beamforming for multiuser mimo communications and mimo radar,'' \emph{IEEE Transactions on Signal Processing}, vol.~68, pp. 3929--3944, 2020.

\bibitem{Chang}
M.-C. Chang, S.-Y. Wang, T.~Erdoğan, and M.~R. Bloch, ``Rate and detection-error exponent tradeoff for joint communication and sensing of fixed channel states,'' \emph{IEEE Journal on Selected Areas in Information Theory}, vol.~4, pp. 245--259, 2023.

\bibitem{ChuJinjin}
J.~Chu, R.~Liu, M.~Li, Y.~Liu, and Q.~Liu, ``Joint secure transmit beamforming designs for integrated sensing and communication systems,'' \emph{IEEE Transactions on Vehicular Technology}, vol.~72, no.~4, pp. 4778--4791, 2023.

\bibitem{Gunlu}
O.~Günlü, M.~R. Bloch, R.~F. Schaefer, and A.~Yener, ``Secure integrated sensing and communication,'' \emph{IEEE Journal on Selected Areas in Information Theory}, vol.~4, pp. 40--53, 2023.

\bibitem{YangYinchao}
Y.~Yang and M.~Shikh-Bahaei, ``Secure integrated sensing and communication for conventional and isac-dedicated receivers,'' in \emph{2023 IEEE 34th Annual International Symposium on Personal, Indoor and Mobile Radio Communications (PIMRC)}, 2023, pp. 1--6.

\bibitem{Gongyu}
Y.~Gong, Y.~Wei, Z.~Feng, F.~R. Yu, and Y.~Zhang, ``Resource allocation for integrated sensing and communication in digital twin enabled internet of vehicles,'' \emph{IEEE Transactions on Vehicular Technology}, vol.~72, no.~4, pp. 4510--4524, 2023.

\bibitem{CuiYuanhao}
Y.~Cui, W.~Yuan, Z.~Zhang, J.~Mu, and X.~Li, ``On the physical layer of digital twin: An integrated sensing and communications perspective,'' \emph{IEEE Journal on Selected Areas in Communications}, vol.~41, no.~11, pp. 3474--3490, 2023.

\bibitem{LiBin}
B.~Li, W.~Liu, W.~Xie, N.~Zhang, and Y.~Zhang, ``Adaptive digital twin for uav-assisted integrated sensing, communication, and computation networks,'' \emph{IEEE Transactions on Green Communications and Networking}, vol.~7, no.~4, pp. 1996--2009, 2023.

\bibitem{MuJunsheng}
J.~Mu, W.~Ouyang, T.~Hong, W.~Yuan, Y.~Cui, and Z.~Jing, ``Digital twins-enabled federated learning in mobile networks: From the perspective of communication-assisted sensing,'' \emph{IEEE Journal on Selected Areas in Communications}, vol.~41, no.~10, pp. 3230--3241, 2023.

\bibitem{Román-García}
A.~Z. Román-García, R.~Nieto, P.~V. Hernandez, M.~C. Rodriguez-Sanchez, and M.~G. Carrillo, ``Smart building digital twin: Wireless sensing and actuation architecture at rey juan carlos university,'' in \emph{2023 IEEE International Workshop on Metrology for Living Environment (MetroLivEnv)}, 2023, pp. 65--69.

\bibitem{LiuJingxian}
J.~Liu, J.~Yan, D.~Wan, X.~Li, S.~Al-Rubaye, A.~Al-Dulaimi, and Z.~Quan, ``Digital twins based intelligent state prediction method for maneuvering-target tracking,'' \emph{IEEE Journal on Selected Areas in Communications}, vol.~41, no.~11, pp. 3589--3606, 2023.

\bibitem{1}
K.~Nehra, A.~Shadmand, and M.~Shikh-Bahaei, ``Cross-layer design for interference-limited spectrum sharing systems,'' in \emph{2010 IEEE Global Telecommunications Conference GLOBECOM 2010}.\hskip 1em plus 0.5em minus 0.4em\relax IEEE, 2010, pp. 1--5.

\bibitem{2}
A.~Shadmand and M.~Shikh-Bahaei, ``Multi-user time-frequency downlink scheduling and resource allocation for lte cellular systems,'' in \emph{2010 IEEE Wireless Communication and Networking Conference}.\hskip 1em plus 0.5em minus 0.4em\relax IEEE, 2010, pp. 1--6.

\bibitem{3}
A.~Kobravi and M.~Shikh-Bahaei, ``Cross-layer adaptive arq and modulation tradeoffs,'' in \emph{2007 IEEE 18th International Symposium on Personal, Indoor and Mobile Radio Communications}.\hskip 1em plus 0.5em minus 0.4em\relax IEEE, 2007, pp. 1--5.

\bibitem{4}
K.~Nehra and M.~Shikh-Bahaei, ``Spectral efficiency of adaptive mqam/ofdm systems with cfo over fading channels,'' \emph{IEEE transactions on vehicular technology}, vol.~60, no.~3, pp. 1240--1247, 2010.

\bibitem{5}
H.~Bobarshad and M.~Shikh-Bahaei, ``M/m/1 queuing model for adaptive cross-layer error protection in wlans,'' in \emph{2009 IEEE Wireless Communications and Networking Conference}.\hskip 1em plus 0.5em minus 0.4em\relax IEEE, 2009, pp. 1--6.

\bibitem{6}
M.~Shikh-Bahaei, ``Joint optimization of “transmission rate” and “outer-loop snr target” adaptation over fading channels,'' \emph{IEEE Transactions on Communications}, vol.~55, no.~3, pp. 398--403, 2007.

\bibitem{7}
Y.~Zhang, Q.~Wu, and M.~R. Shikh-Bahaei, ``On ensemble learning-based secure fusion strategy for robust cooperative sensing in full-duplex cognitive radio networks,'' \emph{IEEE Transactions on Communications}, vol.~68, no.~10, pp. 6086--6100, 2020.

\bibitem{8}
Y.~Zhang, J.~Hou, V.~Towhidlou, and M.~R. Shikh-Bahaei, ``A neural network prediction-based adaptive mode selection scheme in full-duplex cognitive networks,'' \emph{IEEE Transactions on Cognitive Communications and Networking}, vol.~5, no.~3, pp. 540--553, 2019.

\bibitem{9}
W.~Ding and M.~Shikh-Bahaei, ``A partial compress-and-forward strategy for relay-assisted wireless networks based on rateless coding,'' \emph{IEEE Transactions on Vehicular Technology}, vol.~73, no.~1, pp. 1462--1466, 2024.

\bibitem{10.5555/151045}
S.~M. Kay, \emph{Fundamentals of Statistical Signal Processing: Estimation Theory}.\hskip 1em plus 0.5em minus 0.4em\relax USA: Prentice-Hall, Inc., 1993.

\bibitem{Shen}
K.~Shen and W.~Yu, ``Fractional programming for communication systems—part i: Power control and beamforming,'' \emph{IEEE Transactions on Signal Processing}, vol.~66, no.~10, pp. 2616--2630, 2018.

\bibitem{LSTM}
S.~Hochreiter and J.~Schmidhuber, ``Long short-term memory,'' \emph{Neural Computation}, vol.~9, no.~8, pp. 1735--1780, 1997.

\end{thebibliography}

\begin{IEEEbiography}[{\includegraphics[width=1in,height=1.25in,clip,keepaspectratio]{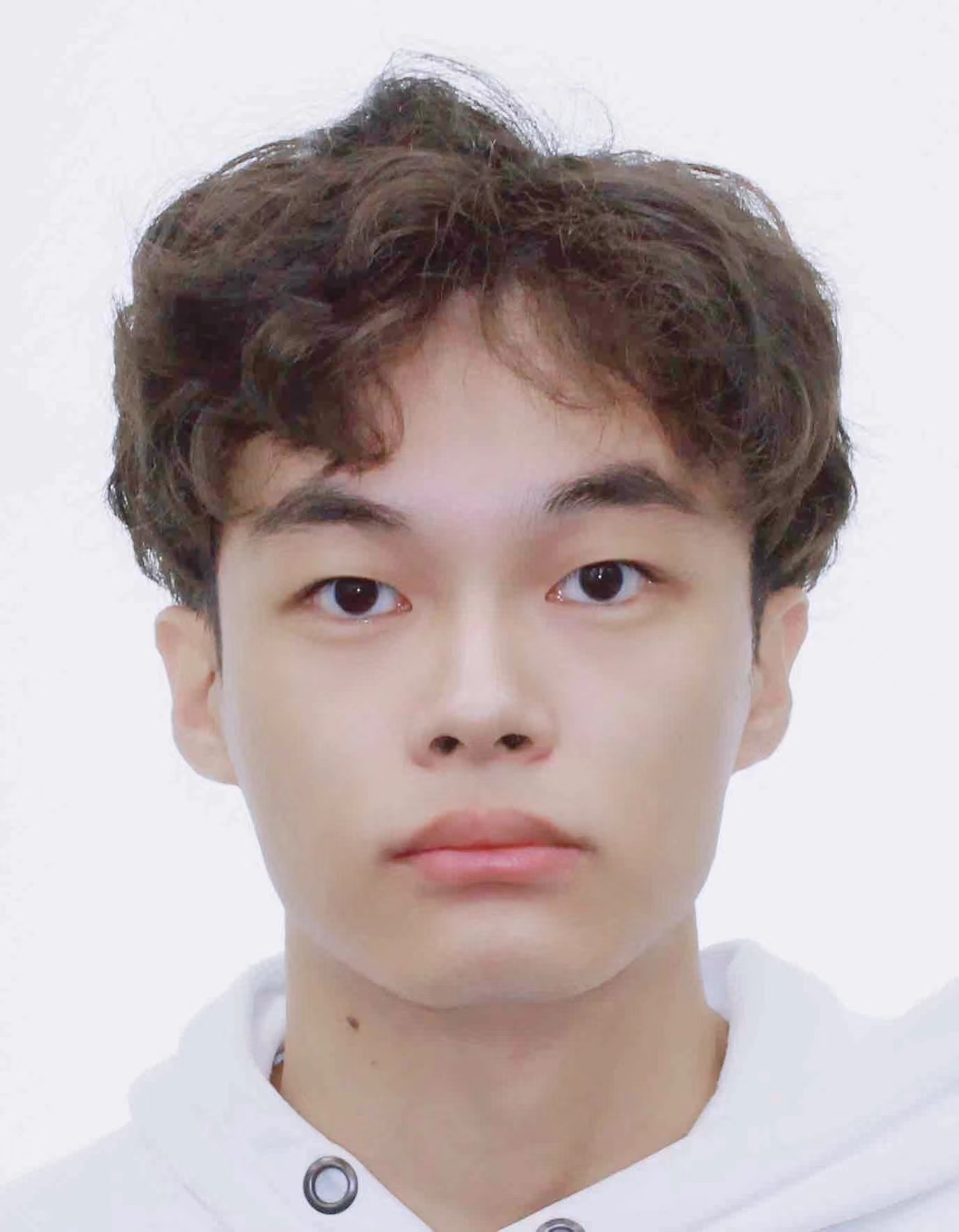}}]{Weihang Ding}
received the B.Sc. degree in telecommunication engineering from Beijing Jiaotong University, Beijing, China, in 2018, the M.Sc. degree in telecommunications from King's College London, London, United Kingdom, in 2019. He is currently pursuing the Ph.D. degree in telecommunications at King's College London too. His main research interests include UAV communication, physical layer resource allocation, hybrid automatic repeat requests, channel coding, and integrated sensing and communication.
\end{IEEEbiography}

\begin{IEEEbiography}[{\includegraphics[width=1in,height=1.25in,clip,keepaspectratio]{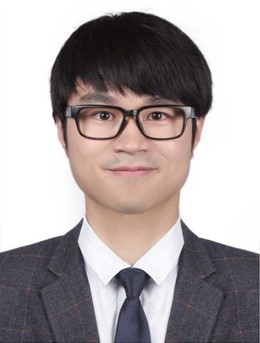}}]{Zhaohui Yang}
(Member, IEEE) is currently a ZJU Young Professor with the Zhejiang Key Laboratory of Information Processing Communication and Networking, College of Information Science and Electronic Engineering, Zhejiang University. He received the Ph.D. degree from Southeast University, Nanjing, China, in 2018. From 2018 to 2020, he was a Post-Doctoral Research Associate at the Center for Telecommunications Research, Department of Informatics, King’s College London, U.K. From 2020 to 2022, he was a Research Fellow at the Department of Electronic and Electrical Engineering, University College London, U.K. His research interests include joint communication, sensing, and computation, federated learning, and semantic communication. He received IEEE Communications Society Leonard G. Abraham Prize Award in 2024, IEEE Marconi Prize Paper Award in 2023, IEEE Katherine Johnson Young Author Paper Award in 2023. He currently serves as an Associate Editor of IEEE TGCN, IEEE CL, IEEE TMLCN. He has served as a Guest Editor for several journals including IEEE \textsc{Journal on Selected Areas in Communications}.
\end{IEEEbiography}

\begin{IEEEbiography}[{\includegraphics[width=1in,height=1.25in,clip,keepaspectratio]{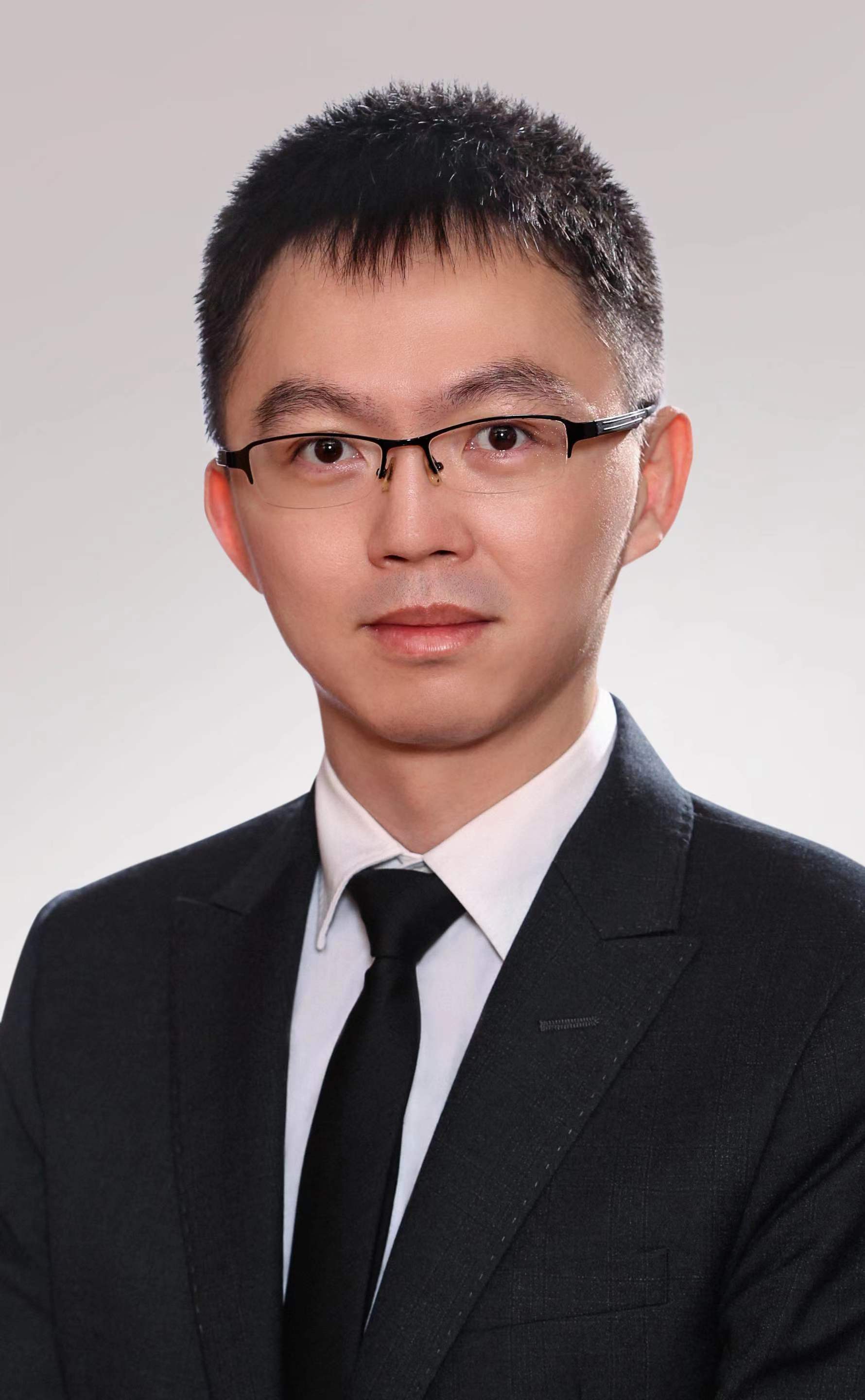}}]{Mingzhe Chen}
is currently an Assistant Professor with the Department of Electrical and Computer Engineering and Frost Institute of Data Science and Computing at University of Miami. His research interests include federated learning, reinforcement learning, virtual reality, unmanned aerial vehicles, and Internet of Things. He has received four IEEE Communication Society journal paper awards including the IEEE Marconi Prize Paper Award in Wireless Communications in 2023, the Young Author Best Paper Award in 2021 and 2023, and the Fred W. Ellersick Prize Award in 2022, and four conference best paper awards at ICCCN in 2023, IEEE WCNC in 2021, IEEE ICC in 2020, and IEEE GLOBECOM in 2020. He currently serves as an Associate Editor of IEEE Transactions on Mobile Computing, IEEE Transactions on Communications, IEEE Wireless Communications Letters, IEEE Transactions on Green Communications and Networking, and IEEE Transactions on Machine Learning in Communications and Networking.
\end{IEEEbiography}

\begin{IEEEbiography}[{\includegraphics[width=1in,height=1.25in,clip,keepaspectratio]{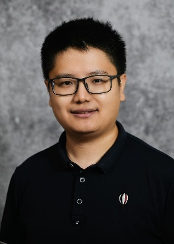}}]{Yuchen Liu}Yuchen Liu is currently an Assistant Professor with the Department of Computer Science at North Carolina State University, USA. He got his Ph.D. degree at the Georgia Institute of Technology, USA. His research interests include wireless networking, digital twins, generative AI, distributed learning, mobile computing, and software simulation. He has received several best paper awards at IEEE and ACM conferences. He currently serves as Associate Editors of IEEE Transactions on Green Communications and Networking, IEEE Transactions on Machine Learning in Communications and Networking, and Elsevier Computer Networks.
\end{IEEEbiography}

\begin{IEEEbiography}[{\includegraphics[width=1in,height=1.25in,clip,keepaspectratio]{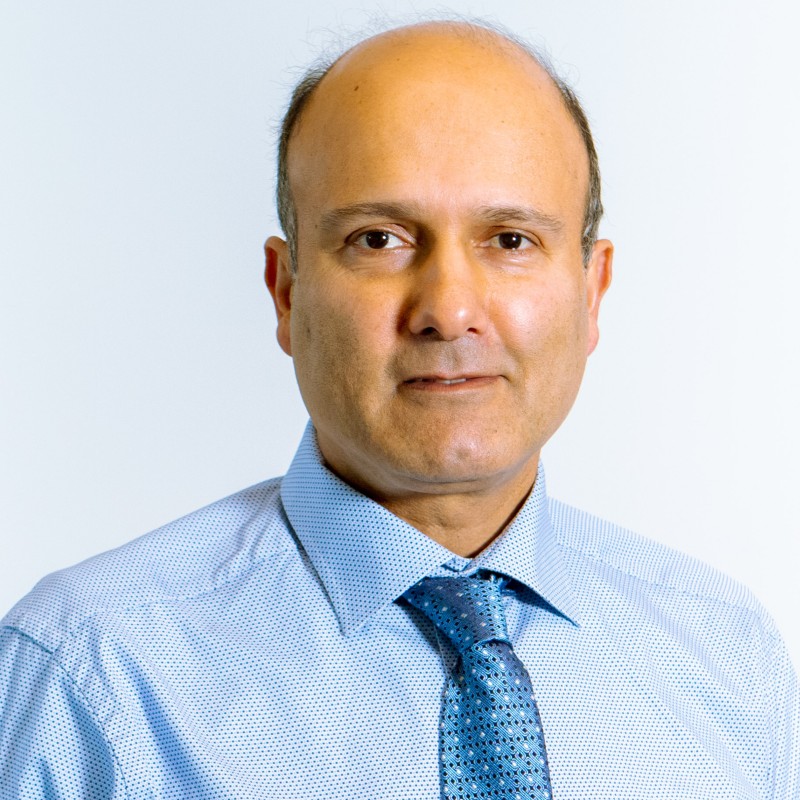}}]{Mohammad Shikh-Bahaei}
(Senior Member, IEEE) received the B.Sc. degree from the University of Tehran, Tehran, Iran, in 1992, the M.Sc. degree from the Sharif University of Technology, Tehran, in 1994, and the Ph.D. degree from King’s College London, London, U.K., in 2000.,He worked for two start-up companies, and National Semiconductor Corporation, Santa Clara, CA, USA (now part of Texas Instruments Inc.), on the design of third-generation mobile handsets, for which he has been awarded three U.S. patents as the Inventor and the Co-Inventor, respectively. In 2002, he joined King’s College London, as a Lecturer, where he is currently a Full Professor of Telecommunications with the Center for Telecommunications Research, Department of Engineering. He has authored/coauthored numerous journals and conference papers. He has been engaged in research in the area of wireless communications and signal processing for 25 years both in academic and industrial organizations. His research interests include learning-based resource allocation for multimedia applications over heterogeneous communication networks, full-duplex and UAV communications, and secure communication over wireless networks.,Dr. Shikh-Bahaei was a recipient of the Overall King’s College London Excellence in Supervisory Award in 2014. He was the Founder and the Chair of the Wireless Advanced (formerly, SPWC) Annual International Conference from 2003 to 2012.
\end{IEEEbiography}

\end{document}